\newtheorem{theorem}{Theorem}[section]
\newtheorem{corollary}[theorem]{Corollary}
\newtheorem{lemma}[theorem]{Lemma}
\newcommand{\ket}[1]{|#1\rangle}
\newcommand{\LBDofFraction}{\lambda}
\title{Quantum Approximate Counting for Markov Chains and Application to Collision Counting}
\author{
Fran{\c c}ois Le Gall\footnote{legall@math.nagoya-u.ac.jp}\\
Graduate School of Mathematics\\ 
Nagoya University\\
\and
Iu-Iong Ng\footnote{m20048d@math.nagoya-u.ac.jp}\\
Graduate School of Mathematics\\ 
Nagoya University\\
}
\date{}
\begin{document}

\maketitle

\begin{abstract}
In this paper we show how to generalize the quantum approximate counting technique developed by Brassard, H{\o}yer and Tapp [ICALP 1998] to a more general setting: estimating the number of marked states of a Markov chain (a Markov chain can be seen as a random walk over a graph with weighted edges). This makes it possible to construct quantum approximate counting algorithms from quantum search algorithms based on the powerful ``quantum walk based search'' framework established by Magniez, Nayak, Roland and Santha [SIAM Journal on Computing 2011]. As an application, we apply this approach to the quantum element distinctness algorithm by Ambainis [SIAM Journal on Computing 2007]: for two injective functions over a set of $N$ elements, we obtain a quantum algorithm that estimates the number $m$ of collisions of the two functions within relative error $\epsilon$ by making $\tilde{O}\left(\frac{1}{\epsilon^{25/24}}\big(\frac{N}{\sqrt{m}}\big)^{2/3}\right)$ queries, which gives an improvement over the $\Theta\big(\frac{1}{\epsilon}\frac{N}{\sqrt{m}}\big)$-query classical algorithm based on random sampling when $m\ll N$. 
\end{abstract}

\section{Introduction}
\subparagraph{Quantum search.}
Grover's search \cite{GroverSTOC96} is the most basic version of quantum search. This quantum algorithm considers the following problem called \emph{unstructured search}: for a function $f\colon\{1,\ldots,N\}\to\{0,1\}$ given as a black-box, find one element $x\in\{1,\ldots,N\}$ such that $f(x)=1$ (we call such an element a solution and assume that there exists at least one). Let $M=|f^{-1}(1)|$ denote the number of solutions. Grover's algorithm outputs with high probability a solution using $O(\sqrt{N/M})$ calls to the function~$f$. This is a quadratic improvement over the classical (i.e., non-quantum) strategy, which is based on random sampling and requires $\Theta(N/M)$ calls to $f$ to find a solution with high probability.

Another fundamental, but more sophisticated, quantum search technique is \emph{quantum walk based search}~\cite{Ambainis07,MagniezNRS11,Szegedy+FOCS04}. The problem considered here is a generalization of unstructured search. As for unstructured search, this problem accesses the data via a black-box. It can be defined in an abstract way for any Markov chain (Markov chains are generalizations of the concept of random walks over a graphs --- see Section~\ref{sec:Markov} for details). Some states of the Markov chains corresponding to ``solutions'' are called the marked states. The goal of the problem is to find one marked state (here we assume again that there is at least one marked state). Let $\LBDofFraction>0$ be a lower bound on the fraction of marked states.  Magniez, Nayak, Roland and Santha~\cite{MagniezNRS11} have shown that this problem can be solved with high probability by a quantum algorithm that makes 
\begin{equation}\label{eq1}
O\left(\mathsf{S}+\frac{1}{\sqrt{\LBDofFraction}}\left(\frac{1}{\sqrt{\delta}}\mathsf{U}+\mathsf{C}\right)\right)
\end{equation}
queries to the black-box, were $\delta$ denotes the spectral gap of the Markov chain and $\mathsf{S}$, $\mathsf{U}$ and~$\mathsf{C}$ represent the number of queries needed to implement the three basic quantum operations corresponding to search: the setup operation, the update operation and the checking operation. Unstructured search simply corresponds to the Markov chain representing a random walk on a complete graph of $N$ vertices, which has spectral gap $\delta=1$, with $\LBDofFraction=M/N$, $\mathsf{S}=1$, $\mathsf{U}=2$ and $\mathsf{C}=0$. The complexity becomes $O(\sqrt{\LBDofFraction})=O(\sqrt{N/M})$, as for Grover's search.\footnote{Grover's search can also be derived in another (more direct) way, with costs $\mathsf{S}=0$, $\mathsf{U}=0$ and $\mathsf{C}=1$.}
\, A striking illustration of the power of quantum walk based search is the problem called \emph{element distinctness}: for two functions $f, g\colon\{1,\ldots,N\}\longrightarrow \{1,\ldots, K\}$ accessible as black boxes, find a pair $(i, j)\in \{1,\ldots,N\}^2$ satisfying $f(i)=g(j)$ if such a pair exists (such a pair is called a collision). Ambainis~\cite{Ambainis07} showed how to solve this problem with $O(N^{2/3})$ queries using quantum walk based search.\footnote{While the version of element distinctness considered in~\cite{Ambainis07} uses only one function, the version we present here is essentially equivalent (see \cite{LeGall+20} for details).}
\, Besides element distinctness, quantum walk based search has been used to design quantum algorithms for many other problems, such as matrix multiplication~\cite{Buhrman+SODA06}, triangle finding~\cite{Jeffery+SODA13,LeGallFOCS14, Magniez+SICOMP07} or associativity testing~\cite{Lee+17}. 

\subparagraph{Quantum approximate counting.}
\emph{Quantum approximate counting} is another very useful quantum primitive designed by Brassard, H{\o}yer and Tapp~\cite{Brassard+98}. It solves the (approximate) counting version of unstructured search: for any given $\epsilon>0$ and a function $f\colon\{1,\ldots,N\}\to\{0,1\}$ given as a black-box, output an $\epsilon$-approximation of the number of solution $M=|f^{-1}(1)|$, i.e., output an integer $\hat M$ such that $|M-\hat M|\le \epsilon M$. The quantum approximate counting algorithm from~\cite{Brassard+98} solves this problem using $O(\frac{1}{\epsilon}\sqrt{N/M})$ calls to~$f$. This is again a quadratic improvement over the classical strategy, which is based on random sampling and requires $\Theta(\frac{1}{\epsilon^2}N/M)$ calls to $f$~\cite{Nayak+99}. Very recently, several variants of quantum approximate counting have been proposed~\cite{Aaronson+20,Suzuki+20,Venkateswaran+STACS21,Wie19}. These variants have the same asymptotic complexity as the original algorithm but may be easier to implement in practice, especially on near-future quantum computers.

Since quantum walk based search is a generalization of Grover search, a natural question is whether a version of quantum approximate counting can be obtained for quantum walk based search as well, i.e., whether there exists an efficient quantum algorithm to estimate the number of marked states of a Markov chain. To our knowledge, this problem has never been studied in the literature. This problem is especially motivated by the task of approximate collision counting: for two functions $f, g\colon\{1,\ldots,N\}\longrightarrow \{1,\ldots, K\}$ accessible as black boxes, output an approximation of the number of collisions. While not stated explicitly, this task is at the heart of most algorithms for estimating the edit distance between two non-repetitive strings~\cite{Andoni+SODA10,LeGall+20,Naumovitz+SODA17}, and is also used for estimating the $\ell_1$-distance between two probability distributions~\cite{Batu+13,Valiant11}.

\subparagraph{Description of our results.}

Our main technical result is the following general approximate counting version for reversible Markov chains.
\vspace*{12pt}
\begin{theorem}
\label{thm:Count}
Let $P$ be a reversible ergodic Markov chain with uniform stationary distribution, $\delta>0$ be the spectral gap of $P$ and $\LBDofFraction>0$ be a known lower bound on the fraction of marked states of $P$. Let $\mathsf{S}$, $\mathsf{U}$ and $\mathsf{C}$ be the number of queries needed to implement, respectively,  the setup operation, the update operation and the checking operation. For any $\epsilon\in (0, 1)$, there exists a quantum algorithm which outputs with high probability an estimate $\hat{M}$ such that $|M-\hat{M}|<\epsilon M$, and uses 
\[
\tilde O\left(\mathsf{S}+\frac{1}{\epsilon}\frac{1}{\sqrt{\LBDofFraction}}\left(\frac{1}{\sqrt{\delta}}\mathsf{U}+\mathsf{C}\right)\right)
\] 
queries to the black box.\footnote{In this paper the notation $\tilde O(\cdot)$ removes the polylogarithmic factors in all parameters.}
\end{theorem}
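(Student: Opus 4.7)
The plan is to transport the Brassard--H{\o}yer--Tapp strategy to the MNRS quantum walk framework. In the unstructured case, BHT perform quantum phase estimation on the Grover iterate, whose relevant eigenvalues are $e^{\pm 2i\theta}$ with $\sin^2\theta = M/N$; reading $\theta$ within additive error $O(\epsilon\sqrt{M/N})$ gives $M$ up to relative error $\epsilon$ and requires $O(\frac{1}{\epsilon}\sqrt{N/M})$ uses of the iterate. The analogue here is the product $W = R_B \cdot R_A$, where $R_B$ is the reflection about the marked-state subspace (implemented via the checking operation, cost $\mathsf{C}$) and $R_A$ is the reflection about the stationary superposition (implemented from the quantum walk). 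By the invariant-subspace analysis underlying MNRS search, $W$ restricted to the two-dimensional subspace containing the uniform superposition acts as a rotation by $2\theta$ with $\sin\theta = \sqrt{M/N}$. After preparing the uniform superposition with a single setup (cost $\mathsf{S}$), I would run phase estimation on $W$, seeded by this state, and return $\hat M = N\sin^2\hat\theta$.

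The key technical input is the MNRS construction of an $\eta$-approximate reflection $\tilde R_A$ using $O(\frac{1}{\sqrt{\delta}}\log(1/\eta))$ walk updates, which yields an approximate iterate $\tilde W$ with $\|\tilde W - W\| = O(\eta)$. Because the known lower bound $\LBDofFraction$ satisfies $\LBDofFraction \le M/N$, we have $\theta = \Omega(\sqrt{\LBDofFraction})$, so to reach relative error $\epsilon$ in $M$ it suffices to estimate $\theta$ within additive error $\Theta(\epsilon\sqrt{\LBDofFraction})$; this calls for $T = \Theta\!\left(\frac{1}{\epsilon\sqrt{\LBDofFraction}}\right)$ controlled applications of $\tilde W$ inside the phase estimation circuit.

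The main obstacle, where care is required, is propagating the approximation error through phase estimation: the operator-norm gap $\|\tilde W^{\,T} - W^T\| = O(T\eta)$ must remain small compared to $1/T$ to preserve phase precision, forcing $\eta = O(\epsilon^2\LBDofFraction)$. Since $\eta$ enters only logarithmically in the per-iterate cost, each application of $\tilde W$ costs $O\!\left(\mathsf{C} + \frac{1}{\sqrt{\delta}}\mathsf{U}\right)$ up to $\mathrm{polylog}(1/\epsilon,1/\LBDofFraction)$ factors absorbed in $\tilde O$, and multiplying by $T$ while adding the single setup yields the claimed bound. A standard median-of-$O(\log(1/\epsilon))$ amplification then boosts the constant success probability of a single phase-estimation run to $1-o(1)$ and is again absorbed into $\tilde O$. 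The only residual subtlety is that the prepared state differs from the exact eigenbasis element of $\tilde W$ by an overlap defect of order $\eta$, which sits comfortably inside the same error budget.
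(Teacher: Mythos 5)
Your proposal follows essentially the same route as the paper's proof: phase estimation on the MNRS walk iterate seeded with the stationary state $\ket{\pi}$ (an equal superposition of the two conjugate eigenvectors with eigenphases $\pm2\theta$), with the approximation error of the MNRS reflection propagated through the $T=\Theta(\frac{1}{\epsilon\sqrt{\LBDofFraction}})$ controlled applications and absorbed only logarithmically into the per-iterate cost. The one inessential imprecision is that you state the reflection error as an operator-norm bound $\Vert \tilde W - W\Vert = O(\eta)$, whereas MNRS only guarantee closeness on states of the form $\ket{\psi}\ket{0^\tau}$ with $\ket{\psi}\in\mathcal{A}+\mathcal{B}$, so the telescoping bound on $\Vert(\tilde W^{\,T}-W^T)\ket{\pi}\ket{0^\tau}\Vert$ must be arranged (as in the paper's Lemma~\ref{lemma:difference}) so that the ideal rotation acts first and keeps the state in that subspace.
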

\vspace*{12pt}

As an application of Theorem~\ref{thm:Count}, we show how to construct a quantum algorithm to approximate the number of collisions. More precisely, the version of approximate collision counting we consider is as follows: for $\epsilon>0$ and for two injective functions $f, g\colon\{1,\ldots,N\}\longrightarrow \{1,\ldots, K\}$, where $K>N$, output an $\epsilon$-approximation of the number of collisions $m$, i.e., an estimate $\hat{m}$ such that $|m-\hat{m}|<\epsilon m$. (We restrict our attention to the case of injective functions since this simplifies the exposition of our result. Note that this version with injective functions is still interesting since it is precisely the version needed for estimating the edit distance between two non-repetitive strings~\cite{Andoni+SODA10,LeGall+20,Naumovitz+SODA17}.)

In order to derive worst-case complexity upper bounds that explicitly depend on the number of collisions, we assume that we know a lower bound $\bar m\le m$, and state the complexities using $\bar{m}$.
There is an easy randomized algorithm based on sampling (which has been, for instance, used implicitly in~\cite{Andoni+SODA10,Naumovitz+SODA17}) that outputs with high probability an $\epsilon$-approximation of $m$ using $O\big(\frac{1}{\epsilon}\frac{N}{\sqrt{m}}+\frac{N}{\sqrt{\bar{m}}}\big)$ evaluations of~$f$ and~$g$, which is essentially tight (see Appendix~\ref{sec:classical} for details). Based on the technique of Theorem~\ref{thm:Count}, we design a fast quantum algorithm for this problem:

\begin{theorem}
\label{thm:Coll}
Given a lower bound $\bar{m}>0$ of the number of collision $m$, for any $\epsilon>0$ there is a quantum algorithm that outputs with high probability an estimate~$\hat{m}$ such that $|m-\hat{m}|<\epsilon m$ using
\[\tilde O\left(\frac{1}{\epsilon^{25/24}}\left(\frac{N}{\sqrt{m}}\right)^{2/3}+\left(\frac{N}{\sqrt{\bar{m}}}\right)^{2/3}\right)\]
queries.
\end{theorem}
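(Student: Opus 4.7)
The plan is to apply Theorem~\ref{thm:Count} to a quantum walk on pairs of $r$-subsets of $\{1,\ldots,N\}$, in the spirit of Ambainis' element distinctness algorithm. I would define the state space as the set of ordered pairs $(A,B)$ with $|A|=|B|=r$ (where $r$ is a parameter to be chosen later), and the Markov chain as the product of two independent Johnson-graph walks on $J(N,r)$: one step replaces a uniformly random element of $A$ by a uniformly random element of $\{1,\ldots,N\}\setminus A$, and similarly for $B$. This chain is reversible and ergodic, has uniform stationary distribution, and spectral gap $\delta=\Theta(1/r)$. A state $(A,B)$ is marked precisely when there exist $i\in A$ and $j\in B$ with $f(i)=g(j)$. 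Using standard data structures (e.g.\ sorted lists of the pairs $(a,f(a))_{a\in A}$ and $(b,g(b))_{b\in B}$), one obtains $\mathsf{S}=\tilde O(r)$, $\mathsf{U}=\tilde O(1)$, and $\mathsf{C}=\tilde O(1)$.

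The key combinatorial step is to relate the number $M$ of marked states to $m$. Setting $p=(r/N)^2$ and writing the collision pairs as $(i_k,j_k)_{k=1}^m$ (with, by injectivity, all the $i_k$ distinct and all the $j_k$ distinct), an inclusion--exclusion / Bonferroni argument on the events ``$(i_k,j_k)\in A\times B$'' gives
\[
mp\bigl(1-O(mp)\bigr)\;\le\;\frac{M}{\binom{N}{r}^2}\;\le\;mp.
\]
Thus, provided $r$ is chosen so that $mp\le c\epsilon$ for a small enough absolute constant $c$, any relative-error-$\Theta(\epsilon)$ estimate of $M/\binom{N}{r}^2$ translates (by dividing by $p$) into a relative-error-$\epsilon$ estimate of $m$.

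I would then invoke Theorem~\ref{thm:Count} with error parameter $\Theta(\epsilon)$ and lower bound $\LBDofFraction=\Omega(\bar m p)$ (which is valid since $m\ge\bar m$), obtaining such an estimate $\hat M$ at cost
\[
\tilde O\!\left(\mathsf{S}+\frac{1}{\epsilon}\cdot\frac{1}{\sqrt{\LBDofFraction}}\!\left(\frac{\mathsf{U}}{\sqrt\delta}+\mathsf{C}\right)\right)
\;=\;
\tilde O\!\left(r+\frac{N}{\epsilon\sqrt{r\bar m}}\right).
\]
Since this bound is phrased in terms of the prior $\bar m$ rather than the true $m$, I would proceed in two stages: first run the approximate counting at \emph{constant} precision to produce a constant-factor estimate $\tilde m$ of $m$ in $\tilde O\bigl((N/\sqrt{\bar m})^{2/3}\bigr)$ queries, which accounts for the second term of the claim; then re-run the approximate counting with $\tilde m$ in place of $\bar m$ and error parameter $\Theta(\epsilon)$ to refine the estimate, yielding (after optimizing $r$) the first term of the claim.

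The main obstacle I anticipate is the tension between the setup term $r$ and the constraint $mp\le c\epsilon$ (equivalent to $r\lesssim N\sqrt{\epsilon/m}$), which forbids choosing $r$ as large as the unconstrained optimum $r=(N/(\epsilon\sqrt m))^{2/3}$ once $m$ exceeds a threshold of the form $\epsilon^a N^2$. Navigating the two regimes---and in particular controlling the higher-order terms of the inclusion--exclusion for $M$ when $mp$ is no longer negligible---is where the unusual exponent $25/24$ in the statement should emerge from a careful optimization of $r$ across the boundary between the two cases.
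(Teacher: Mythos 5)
Your overall architecture matches the paper's: a Johnson-type walk whose marked states are the $r$-subsets containing a collision (the paper uses a single walk on $r$-subsets of the size-$2N$ disjoint union of the two domains rather than your product of two walks, but this is immaterial), Theorem~\ref{thm:Count} applied with $\mathsf{S}=r$, $\mathsf{U}=O(1)$, $\mathsf{C}=0$, an inclusion--exclusion link between $M_r$ and $m$, and a constant-factor pre-estimate to calibrate the parameters. The genuine gap is in the combinatorial step. Your sandwich $mp\bigl(1-O(mp)\bigr)\le M/\binom{N}{r}^2\le mp$ forces the constraint $mp\lesssim\epsilon$, i.e.\ $r\lesssim\sqrt{\epsilon}\,N/\sqrt{m}$, and under that constraint the cost $r+\frac{N}{\epsilon\sqrt{rm}}$ degrades to $\Theta\bigl(\epsilon^{-5/4}(N/\sqrt{m})^{1/2}\bigr)$ whenever the constraint binds; this exceeds the claimed $\epsilon^{-25/24}(N/\sqrt{m})^{2/3}$ throughout the admissible window $N/\sqrt{m}<\epsilon^{-5/4}$, so no optimization of $r$ ``across the boundary'' can recover the theorem from your estimate alone. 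The paper's Lemma~\ref{lemma:coll} is the missing ingredient: instead of outputting $\hat{M}_r/(p\binom{N}{r}^2)$, it outputs $\frac{1+R'}{\binom{2N-2}{r-2}}\hat{M}_r$, where $R'$ is an explicitly computable proxy (built from the upper bound $m_{\text{up}}$) for the first-order inclusion--exclusion ratio $R_1$. This correction cancels the leading bias, leaving an error of order $mR_1^2$, so only $R_1\lesssim\sqrt{\epsilon}$ (i.e.\ $mr^2/N^2\lesssim\sqrt{\epsilon}$) is needed. With that relaxed constraint the paper simply fixes $r=\Theta\bigl(\epsilon^{1/12}(N/\sqrt{m})^{2/3}\bigr)$, verifies feasibility under the standing assumptions $\epsilon>1/m_{\text{up}}$ and $N/m_{\text{up}}=\Omega(\log N)$, and reads off the exponent $25/24$; there is no two-regime case analysis.

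A second, smaller gap is your first stage. Running the walk-based counter at constant precision with $\LBDofFraction=\Omega(\bar{m}p)$ does give an accurate $\hat{M}$, but converting it to a constant-factor estimate of $m$ requires $mp=O(1)$ for the \emph{true} $m$; if $m\gg\bar{m}$, the value of $r$ you would tune to $\bar{m}$ makes almost every state marked and $\hat{M}/(p\binom{N}{r}^2)$ says nothing about $m$. The paper sidesteps this with a separate routine (Appendix~\ref{sec:rough}): a doubling loop over sampling probabilities combined with Ambainis' element distinctness on the sampled sets, costing $\tilde{O}\bigl((N/\sqrt{\bar{m}})^{2/3}\bigr)$. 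Your stage-one plan could be repaired by a geometric search over candidate magnitudes of $m$, but as written it does not work.
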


\subparagraph{Overview of our techniques.}
Let us start by giving an overview of the quantum approximate counting by Brassard, H{\o}yer and Tapp~\cite{Brassard+98}. Roughly speaking, the quantum approximate counting algorithm from~\cite{Brassard+98} applies the technique called quantum phase estimation~\cite{Cleve+98, Kitaev95}, which is based on the quantum Fourier transform, in order to estimate the eigenvalues of the unitary operator used as the main subroutine of Grover's algorithm. This operator, when restricted to the appropriate subspace, is a rotation whose angle is closely related to the number of solutions. The eigenvalues of this operator are thus closely related to the number of solutions as well. A good approximation of the eigenvalues, which can be obtained using quantum phase estimation, then gives a good approximation of the number of solutions.

    The main insight is that the main operator $U$ of quantum walk based search by Magniez, Nayak, Roland and Santha~\cite{MagniezNRS11} \emph{approximately} corresponds to a rotation $\overline{U}$ (see Corollary~\ref{cor:U-Ubar} in Section~\ref{sec:Markov}). To  prove Theorem~\ref{thm:Count}, we observe that the eigenvalues of this rotation $\overline{U}$ are closely related to the fraction of marked states with respect to the stationary distribution of the Markov chain, and then shows that quantum phase estimation applied on $U$ (instead of $\overline{U}$) with good enough precision gives a good estimation of these eigenvalues and thus a good approximation of the number of marked states if the stationary distribution is uniform (Section~\ref{sec:Markov-count}). To prove Theorem~\ref{thm:Coll}, we additionally establish the relation between the number of collisions and the number of marked states in the Markov chain corresponding to a random walk over the Johnson graph (which is the same Markov chain as the one used in Ambainis' element distinctness algorithm~\cite{Ambainis07}), and show that a good approximation for the latter gives a good enough approximation of the former (Section~\ref{sec:coll}).

\section{Preliminaries}
We assume that the reader is familiar with the basics of quantum computation (we refer to, e.g., \cite{QCQI} for a good reference) and describe below phase estimation and quantum walk based search.

\subsection{Phase estimation}\label{sec:PE}
 We first describe phase estimation~\cite{Cleve+98, Kitaev95} (see also~\cite{QCQI}).
Given a unitary matrix $U$ and an eigenvector $|u\rangle$ of $U$ corresponding to an eigenvalue $e^{2\pi i\varphi_u}$, where $\varphi_u\in[0,1)$, phase estimation is a quantum algorithm that outputs a good approximation of $\varphi_u$. The phase estimation algorithm is also given a positive integer $t$ as input, which controls the precision of the estimation. 

In order to implement this algorithm, the unitary $U$ needs to be accessible as follows: 
we have access to a black box that performs a controlled-$U$ operation. 
Note that when an implementation of $U$ as a concrete quantum circuit is available, a circuit implementation of the controlled-$U$ operation can be simply obtained by replacing each elementary gate with its controlled version. 

The phase estimation is described as Algorithm~\ref{algo:PE} below, and a concrete implementation as a quantum circuit which we denote $C_t(U)$ is given in Figure~\ref{fig:circuit}. The total number of controlled-$U$ operations is $2^0+2^1+\cdots+2^{t-1}=O(2^t)$. The following theorem shows that when setting $t$ appropriately, the algorithm outputs a good approximation with high probability.

\begin{theorem}[\cite{QCQI}]\label{th:PE}
For any integer $t_1\ge 1$ and any $\xi\in(0,1]$, when setting $t=t_1+\lceil\log_2(2+\frac{1}{2\xi})\rceil$ the phase estimation algorithm {\bf Est}$(U, |u\rangle, t)$ outputs with probability at least $1-\xi$ an approximation $\tilde{\varphi}_u$ such that $|\varphi_u-\tilde{\varphi}_u|<2^{-t_1}$. 
\end{theorem}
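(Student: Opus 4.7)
The plan is to follow the standard phase-estimation analysis. First I would trace the circuit $C_t(U)$ on input $\ket{0}^{\otimes t}\otimes\ket{u}$: after the initial Hadamards and the $t$ controlled powers of $U$, the eigenvalue relation $U^{2^j}\ket{u}=e^{2\pi i \varphi_u 2^j}\ket{u}$ factors $\ket{u}$ out and leaves the first register in the state $\frac{1}{\sqrt{2^t}}\sum_{k=0}^{2^t-1}e^{2\pi i \varphi_u k}\ket{k}$. An inverse quantum Fourier transform followed by a computational-basis measurement will then produce an outcome $m\in\{0,\ldots,2^t-1\}$ with probability $|\alpha_m|^2$, where $\alpha_m=\frac{1}{2^t}\sum_{k=0}^{2^t-1}e^{2\pi ik(\varphi_u-m/2^t)}$, and the algorithm reports $\tilde\varphi_u = m/2^t$.

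Next I would pick the unique integer $b\in\{0,\ldots,2^t-1\}$ with $b/2^t\le\varphi_u<(b+1)/2^t$ and set $2^t\varphi_u-b=\delta\in[0,1)$. Summing the geometric series defining $\alpha_m$ and using the standard inequalities $|1-e^{ix}|\le 2$ in the numerator together with $|1-e^{ix}|\ge 2|x|/\pi$ in the denominator will yield a pointwise bound of the form $|\alpha_m|^2 \le \frac{1}{4(m-b-\delta)^2}$ for $m\not\equiv b,b+1 \pmod{2^t}$. This single amplitude estimate is the main (but routine) calculation of the proof.

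Finally, I would sum this bound over all outcomes $m$ with cyclic distance $|m-b|_{\bmod 2^t}>e$ for a parameter $e\ge 1$, comparing the tail with $\int_e^\infty x^{-2}\,dx$ to get the textbook estimate $\Pr[|m-b|_{\bmod 2^t}>e]\le \frac{1}{2(e-1)}$. Since $|b/2^t-\varphi_u|\le 2^{-t}$, the event $|m-b|_{\bmod 2^t}\le 2^{t-t_1}-1$ is sufficient to guarantee $|\tilde\varphi_u-\varphi_u|<2^{-t_1}$, so I would set $e=2^{t-t_1}-1$ and solve $\frac{1}{2(2^{t-t_1}-2)}\le\xi$ for $t$; this gives $2^{t-t_1}\ge 2+\frac{1}{2\xi}$, matching the stated choice $t=t_1+\lceil\log_2(2+\frac{1}{2\xi})\rceil$. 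The hard part will be the amplitude bound in the second paragraph, although it is entirely standard; everything on either side of it is bookkeeping and arithmetic.
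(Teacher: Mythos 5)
Your outline is correct and is precisely the standard Nielsen--Chuang derivation: the paper does not prove Theorem~\ref{th:PE} itself but cites it from~\cite{QCQI}, and your amplitude bound $|\alpha_m|^2\le \frac{1}{4(l-\delta)^2}$ (with $l=m-b$ taken modulo $2^t$ in $(-2^{t-1},2^{t-1}]$ so the denominator phase lies in $[-\pi,\pi]$), the tail estimate $\frac{1}{2(e-1)}$, and the choice $e=2^{t-t_1}-1$ reproduce exactly the argument in that reference. No gaps; this matches the source the paper relies on.
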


\begin{algorithm}
\caption{The phase estimation algorithm {\bf Est}$(U, |u\rangle, t)$ }
\label{algo:PE}
\DontPrintSemicolon
Prepare the quantum state $|0^t\rangle |u\rangle$.\;
Apply a Hadamard gate of each of the first $t$ qubits.\;
Apply controlled-$U$ operations on the second register, controlled by the first register.\;
Apply the inverse Fourier transform over the first register.\;
Measure the first register in the computational basis and get an integer $b\in\{0,1,\ldots,2^t-1\}$.\;
Output $\frac{b}{2^t}\eqqcolon\tilde{\varphi}_u$.
\end{algorithm}

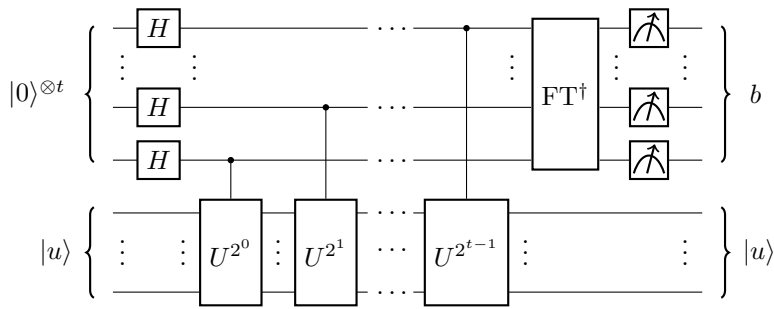
\begin{figure}[ht!]
\centering
\begin{tikzpicture}[scale=0.5,rectnode/.style={thick,shape=rectangle,draw=black,minimum height=14mm, minimum width=8mm},rectnode2/.style={thick,shape=rectangle,draw=black,minimum height=5mm, minimum width=5mm},rectnode3/.style={thick,shape=rectangle,draw=black,minimum height=20mm, minimum width=5mm}]
    \newcommand\YA{7}
        
    \node[rectnode2] (H1) at (-0.3,1.4*\YA) {$H$};
    \node[rectnode2] (H2) at (-0.3,1.1*\YA) {$H$};
    \node[rectnode2] (H3) at (-0.3,0.9*\YA) {$H$};

    \node[rectnode] (U0) at (1.6,0.55*\YA) {$U^{2^0}$};
    \node[rectnode] (U1) at (1.6+2.5,0.55*\YA) {$U^{2^1}$};
    \node[rectnode] (Ut-1) at (7.8,0.55*\YA) {$U^{2^{t-1}}$};
    
    \node[rectnode3] (FT-1) at (10.4,1.15*\YA) {FT$^\dagger$};


     \draw (-1.5,1.4*\YA) -- (-0.85,1.4*\YA);
     \draw (-1.5,1.1*\YA) -- (-0.85,1.1*\YA);
     \draw (-1.5,0.9*\YA) -- (-0.85,0.9*\YA);
     
     \draw (-1.5,0.7*\YA) -- (0.8,0.7*\YA);
     \draw (-1.5,0.4*\YA) -- (0.8,0.4*\YA);

    \draw (2.4,0.7*\YA) -- (3.3,0.7*\YA);
    \draw (2.4,0.4*\YA) -- (3.3,0.4*\YA);

    \draw (4.9,0.7*\YA) -- (5.2,0.7*\YA);
    \draw (4.9,0.4*\YA) -- (5.2,0.4*\YA);
    \node[draw=none,fill=none] at (5.9,0.7*\YA) {$\dots$};
    \node[draw=none,fill=none] at (5.9,0.4*\YA) {$\dots$};
    
    \draw (6.4,0.7*\YA) -- (6.7,0.7*\YA);
    \draw (6.4,0.4*\YA) -- (6.7,0.4*\YA);
     
    \draw (8.9,0.7*\YA) -- (14,0.7*\YA);
    \draw (8.9,0.4*\YA) -- (14,0.4*\YA);
     
    \draw (-1.5+1.75,1.4*\YA) -- (5.2,1.4*\YA);
    \draw (-1.5+1.75,1.1*\YA) -- (5.2,1.1*\YA);
    \draw (-1.5+1.75,0.9*\YA) -- (5.2,0.9*\YA);
    \draw (6.4,1.4*\YA) -- (-0.8+10.3,1.4*\YA);
    \draw (6.4,1.1*\YA) -- (-0.8+10.3,1.1*\YA);
    \draw (6.4,0.9*\YA) -- (-0.8+10.3,0.9*\YA);
    \filldraw[black] (1.6,0.9*\YA) circle (2pt);
    \draw[-] (1.6,0.9*\YA) -- (U0);
    \filldraw[black] (1.6+2.5,1.1*\YA) circle (2pt);
    \draw[-] (1.6+2.5,1.1*\YA) -- (U1);
    \filldraw[black] (7.8,1.4*\YA) circle (2pt);
    \draw[-] (7.8,1.4*\YA) -- (Ut-1);
    \node[draw=none,fill=none] at (5.9,1.4*\YA) {$\dots$};
    \node[draw=none,fill=none] at (5.9,1.1*\YA) {$\dots$};
    \node[draw=none,fill=none] at (5.9,0.9*\YA) {$\dots$};
     
    \draw (11.3,1.4*\YA) -- (12.1,1.4*\YA);
    \draw (11.3,1.1*\YA) -- (12.1,1.1*\YA);
    \draw (11.3,0.9*\YA) -- (12.1,0.9*\YA);
     
    \draw[thick,decorate,decoration={brace,amplitude=1mm}] (-2,-1.8+1.15*\YA) -- (-2,1.8+1.15*\YA);
    \node[draw=none,fill=none] at (-3.5,1.15*\YA) {$\ket{0}^{\otimes t}$};
    \node[draw=none,fill=none] at (-1.25,1.28*\YA) {$\vdots$};
    \node[draw=none,fill=none] at (0.65,1.28*\YA) {$\vdots$};
    \node[draw=none,fill=none] at (9,1.28*\YA) {$\vdots$};
    \node[draw=none,fill=none] at (11.75,1.28*\YA) {$\vdots$};
     
    \draw[thick,decorate,decoration={brace,amplitude=1mm}] (-2,-1.2+0.55*\YA) -- (-2,1.2+0.55*\YA);
    \node[draw=none,fill=none] at (-3,0.55*\YA) {$\ket{u}$}; 
    \node[draw=none,fill=none] at (-1.25,0.58*\YA) {$\vdots$};
    \node[draw=none,fill=none] at (0.35,0.58*\YA) {$\vdots$};
    \node[draw=none,fill=none] at (2.85,0.58*\YA) {$\vdots$};
    \node[draw=none,fill=none] at (5.9,0.56*\YA) {$\dots$};
    \node[draw=none,fill=none] at (9.35,0.58*\YA) {$\vdots$};
    \node[draw=none,fill=none] at (13.55,0.58*\YA) {$\vdots$};

    \draw[thick,decorate,decoration={brace,mirror,amplitude=1mm}] (14.5,-1.2+0.55*\YA) -- (14.5,1.2+0.55*\YA);
    \node[draw=none,fill=none] at (15.5,0.55*\YA) {$\ket{u}$};

    \draw[thick,decorate,decoration={brace,mirror,amplitude=1mm}] (14.5,-1.8+1.15*\YA) -- (14.5,1.8+1.15*\YA); 
     \node[rectnode2] (M1) at (12.6,1.4*\YA) {};
     \node[rectnode2] (M2) at (12.6,1.1*\YA) {};
     \node[rectnode2] (M3) at (12.6,0.9*\YA) {};
     \draw (13.1,1.4*\YA) -- (14,1.4*\YA);
     \draw (13.1,1.1*\YA) -- (14,1.1*\YA);
     \draw (13.1,0.9*\YA) -- (14,0.9*\YA);
     \node[draw=none,fill=none] at (13.55,1.28*\YA) {$\vdots$};
     \node[draw=none,fill=none] at (15.4,1.15*\YA) {$b$};

     \draw[->, thick] (12.55,1.35*\YA) -- (12.75,1.46*\YA);
     \draw[thick] (12.2,1.35*\YA) .. controls  (12.4,1.44*\YA) and  (12.8,1.44*\YA) .. (13,1.35*\YA);
     \draw[->, thick] (12.55,1.05*\YA) -- (12.75,1.16*\YA);
     \draw[thick] (12.2,1.05*\YA) .. controls  (12.4,1.14*\YA) and  (12.8,1.14*\YA) .. (13,1.05*\YA);
     \draw[->, thick] (12.55,0.85*\YA) -- (12.75,0.96*\YA);
     \draw[thick] (12.2,0.85*\YA) .. controls  (12.4,0.94*\YA) and  (12.8,0.94*\YA) .. (13,0.85*\YA);
\end{tikzpicture}
\vspace*{13pt}
\caption{\label{fig:circuit}The circuit $C_t(U)$ for phase estimation, where $H$ is the Hadamard gate and FT$^\dagger$ is the inverse Fourier transform.}
\end{figure}

\subsection{Search via quantum walk}\label{sec:Markov}
In this subsection we describe the quantum walk based search approach  (also called search via quantum walk) by Magniez, Nayak, Roland and Santha~\cite{MagniezNRS11}.

\subparagraph{Classical Markov chains.}
A Markov chain $P$ with a finite state space $X$ corresponds to a random walk over a weighted graph. In this paper, as in~\cite{MagniezNRS11}, we write $|X|=n$ and identify the Markov chain with its transition matrix $P=(p_{xy})$, where $p_{xy}$ is the probability of transition from $x$ to $y$ (the matrix $P$ is a stochastic matrix). The Markov chain is ergodic if its reachability graph is connected and non-bipartite. An ergodic Markov chain has a unique stationary distribution, which we denote $\pi$. The Markov chain is reversible if the equality $\pi_xp_{xy}=\pi_yp_{yx}$ holds for all $x,y\in X$ (for example, any Markov chain induced from an undirected weighted graph is reversible). 

All the Markov chains considered in this paper are ergodic and reversible. Let $\mathcal{M}$ be a non-empty subset of~$X$. The elements of $\mathcal{M}$ are called the marked states of~$P$.   Define 
\[
p_M=\sum_{x\in\mathcal{M}}\pi_x,
\]
the fraction of marked states in the stationary distribution. 

\subparagraph{Quantum implementation of Markov chains.} Due to reversibility issues in the quantum setting, the approach by Magniez, Nayak, Roland and Santha~\cite{MagniezNRS11} for implementing Markov chains considers pairs of states (i.e., the space $X\times X$) instead of working with states (i.e., the space $X$). Consider the vector space $\mathcal{H}=\mathbb{C}^{X\times X}=\operatorname{span}(\ket{x}\ket{y} : (x,y)\in X\times X)$.
Let $\mathcal{A}$ and $\mathcal{B}$ be the subspaces of $\mathcal{H}$ defined as $\mathcal{A}=\operatorname{span}(|x\rangle |p_x\rangle : x\in X)$ and $\mathcal{B}=\operatorname{span}(|p^*_y\rangle |y\rangle : y\in X)$, where 
\[
|p_x\rangle =\sum\limits_{y\in X}\sqrt{p_{xy}}|y\rangle \hspace{3mm}\textrm{ and }\hspace{3mm} |p_y^*\rangle =\sum\limits_{x\in X}\sqrt{p_{yx}}|x\rangle.
\]
The unitary operation $W(P)$ defined on $\mathcal{H}$ by $W(P)=\operatorname{ref}(\mathcal{B})\cdot\operatorname{ref}(\mathcal{A})$ is called the quantum walk based on the classical chain $P$. 


Define the following two quantum states:
\begin{align*}
|\pi\rangle &=\sum\limits_{x\in X}\sqrt{\pi_x}|x\rangle |p_x\rangle,\\ 
|\mu\rangle &=\frac{1}{\sqrt{p_M}}\sum\limits_{x\in\mathcal{M}}\sqrt{\pi_x}|x\rangle |p_x\rangle, 
\end{align*}
and write $\mathcal{S}=\operatorname{span}(|\pi\rangle , |\mu\rangle)$. Let $|\mu^\perp\rangle$ denote the state orthonormal to $|\mu\rangle$ in the subspace~$\mathcal{S}$ of $\mathcal{H}$. Then
\[|\pi\rangle =\sin\varphi |\mu\rangle +\cos\varphi |\mu^\perp\rangle,\]
where $\varphi\in [0,\frac{\pi}{2})$, $\sin\varphi =\sqrt{p_M}$ (remember that $p_M$ denotes the fraction of marked items of~$P$).
Let us write $\operatorname{ref}(\pi)$ the unitary operation defined on $\mathcal{A}+\mathcal{B}$ such that $\operatorname{ref}(\pi)|\pi\rangle=|\pi\rangle$ and $\operatorname{ref}(\pi)|\psi\rangle=-|\psi\rangle $ for any $|\psi\rangle\in\mathcal{A}+\mathcal{B}$ orthogonal to $\ket{\pi}$. Define the unitary operation $V_0$ on $\mathcal{A}+\mathcal{B}$ as follows: 
\[V_0\colon |x\rangle |y\rangle \mapsto\left \{
\begin{aligned}
-|x\rangle |y\rangle  & \quad\text{if }x\in\mathcal{M}\\
|x\rangle |y\rangle  & \quad\text{otherwise}
\end{aligned}
\right.\]
and let $\overline{V}=\operatorname{ref}(\pi)\cdot V_0$. In the subspace~$\mathcal{S}$ the unitary operation $\overline{V}$ is a rotation of angle $2\varphi$.

One of the main technical contributions of~\cite{MagniezNRS11} is the construction of a quantum circuit which efficiently implements an operator that is close to the reflection $\operatorname{ref}(\pi)$.

\def\thm6inQW{\cite[Section 3.2]{MagniezNRS11}}

\begin{theorem}[\thm6inQW]
\label{thm:QW}
Let $P$ be a reversible ergodic Markov chain with a state space of size $n\geq 2$. Let $\delta$ denote the spectral gap of $P$.
Then for any integer $k$ there exists a quantum circuit $R(P)$ that acts on $O(\log n)+\tau$ qubits, where $\tau=O(k\log(1/\delta))$, and satisfies the following properties:
\begin{enumerate}
\item The circuit $R(P)$ uses $O(k\log(1/\delta))$ Hadamard gates, $O(k\log^2(1/\delta))$ controlled phase rotations, and makes at most $O(k/\sqrt{\delta})$ calls to the controlled quantum walk c-$W(P)$ and its inverse c-$W(P)^\dagger$.
\item For any $|\psi\rangle\in\mathcal{A}+\mathcal{B}$, $\Vert (R(P)-\operatorname{ref}(\pi)\otimes I)|\psi\rangle |0^{\tau}\rangle\Vert\leq 2^{1-k}$.
\end{enumerate}
\end{theorem}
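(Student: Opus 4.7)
The plan is to realize $R(P)$ through phase estimation applied to the controlled quantum walk $c$-$W(P)$. The spectral input I would use is Szegedy's analysis of $W(P)$ restricted to $\mathcal{A}+\mathcal{B}$: within this subspace, $|\pi\rangle$ is the unique eigenvector of eigenvalue $1$, and every other eigenvalue is of the form $e^{\pm 2i\theta}$ with $\theta\ge\sqrt{\delta}$, where $\delta$ is the spectral gap of $P$ (this is the walk analogue of the discriminant correspondence between the singular values of $P$ and the eigenvalues of $W(P)$). Building an approximation to $\operatorname{ref}(\pi)$ on $\mathcal{A}+\mathcal{B}$ is therefore equivalent to building a unitary that recognises, to high accuracy, whether a given eigenvector of $W(P)$ has phase $0$ or phase bounded below by $\sqrt{\delta}/(2\pi)$.

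First I would construct a one-shot version $R_0(P)$: run the phase-estimation circuit $C_t(W(P))$ of Section~\ref{sec:PE} with $t=O(\log(1/\delta))$ ancilla qubits on $|\psi\rangle|0^t\rangle$; then flip the sign of the data register conditioned on the $t$-qubit phase estimate being nonzero (a single multi-controlled $Z$); then uncompute by applying $C_t(W(P))^\dagger$. For any eigenvector of phase $0$ the phase register is deterministically $|0^t\rangle$ after the forward pass, so no sign is applied and uncomputation returns the ancillas exactly to $|0^t\rangle$. For any eigenvector of phase $\ge\sqrt{\delta}/(2\pi)$, Theorem~\ref{th:PE} with $t_1$ chosen so that $2^{-t_1}<\sqrt{\delta}/(2\pi)$ guarantees a nonzero estimate with constant probability, so the sign is flipped with that same probability.

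To drive the error down to $2^{1-k}$ I would run $k$ independent phase-estimation blocks in parallel, each using its own ancilla register of $O(\log(1/\delta))$ qubits and $O(1/\sqrt{\delta})$ calls to $c$-$W(P)$, and replace the single conditional sign-flip by a majority vote across the $k$ blocks of whether the phase register is zero. A Chernoff-type bound over the independent phase-estimation outcomes drives the probability of an incorrect verdict per eigenvector down to $2^{-\Omega(k)}$. The resource count then adds up to $O(k\log(1/\delta))$ Hadamards (one per ancilla), $O(k\log^2(1/\delta))$ controlled phase rotations (from the $k$ inverse Fourier transforms), and $O(k/\sqrt{\delta})$ calls to $c$-$W(P)$ and its inverse, matching the theorem.

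The hard part is upgrading the per-eigenvector success probability into the claimed uniform bound $\|(R(P)-\operatorname{ref}(\pi)\otimes I)|\psi\rangle|0^\tau\rangle\|\le 2^{1-k}$ for arbitrary $|\psi\rangle\in\mathcal{A}+\mathcal{B}$. Because uncomputation in general leaves the ancilla register entangled with the data, one cannot just union-bound over eigen-components; instead I would argue that, conditioned on the majority verdict being correct, each phase-estimation block's ancillas return to $|0^t\rangle$ with amplitude $1-O(2^{-k})$, and that the off-diagonal contribution from the rare ``wrong verdict'' branch has total norm $O(2^{-k})$ on each eigen-component. Decomposing an arbitrary $|\psi\rangle\in\mathcal{A}+\mathcal{B}$ in the eigenbasis of $W(P)|_{\mathcal{A}+\mathcal{B}}$ and summing by the triangle inequality would then yield the stated operator-norm bound.
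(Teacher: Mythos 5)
This theorem is not proved in the paper at all: it is imported verbatim from \cite[Section 3.2]{MagniezNRS11}, so there is no in-paper proof to compare against. Your sketch is essentially the construction of Magniez, Nayak, Roland and Santha: use the Szegedy correspondence to see that on $\mathcal{A}+\mathcal{B}$ the walk $W(P)$ has $\ket{\pi}$ as its unique phase-$0$ eigenvector and all other eigenphases at least $\Omega(\sqrt{\delta})$, run phase estimation to precision $O(\log(1/\delta))$ qubits to detect a nonzero phase, flip the sign accordingly, uncompute, and amplify with $k$ independent phase-estimation registers; the resource counts then come out exactly as stated. Two points of divergence from the original, one cosmetic and one that needs repair. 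First, MNRS flip the sign unless \emph{all} $k$ ancilla registers read $0^t$, rather than taking a majority vote; your majority rule also works but forces you to run a Chernoff bound at the amplitude level, whereas the all-zero rule makes the error term on each eigenvector literally $2\langle 0\ldots0\ket{g_w}\cdot(\text{unit vector})$, i.e.\ a single rank-one amplitude bounded by $c^{k}$ for a constant $c<1/2$, which is where the clean $2^{1-k}$ comes from. Second, and more substantively, your final step --- ``decompose $\ket{\psi}$ in the eigenbasis and sum by the triangle inequality'' --- does not give the uniform bound: the triangle inequality over eigencomponents costs a factor $\sum_w|\beta_w|$, which can be as large as the square root of the dimension. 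The correct argument observes that the per-eigenvector error vectors are mutually orthogonal (with the all-zero rule each is proportional to $C_t(W(P))^{\dagger\otimes\text{(blocks)}}\ket{w}\ket{0\ldots0}$ for orthonormal $\ket{w}$), so their weighted sum has norm at most $\max_w\Vert e_w\Vert$ by Pythagoras rather than by the triangle inequality. With that substitution your outline matches the cited proof.
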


The following immediate corollary of Theorem~\ref{thm:QW} gives an upper bound for the difference between $U=R(P)\cdot (V_0\otimes I)$ and the rotation $\overline{U}=\overline{V}\otimes I$.

\begin{corollary}
\label{cor:U-Ubar}
For any $|\psi\rangle\in\mathcal{A}+\mathcal{B}$, $\Vert (U-\overline{U})|\psi\rangle |0^{\tau}\rangle\Vert\leq 2^{1-k}$.
\end{corollary}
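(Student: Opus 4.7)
The corollary is advertised as immediate, and indeed my plan is to deduce it in one line from Theorem~\ref{thm:QW} by factoring $\overline{U}$. Since $\overline{V}=\operatorname{ref}(\pi)\cdot V_0$, we have
\[\overline{U}=\overline{V}\otimes I=(\operatorname{ref}(\pi)\otimes I)(V_0\otimes I),\]
and therefore
\[U-\overline{U}=R(P)(V_0\otimes I)-(\operatorname{ref}(\pi)\otimes I)(V_0\otimes I)=\bigl(R(P)-\operatorname{ref}(\pi)\otimes I\bigr)(V_0\otimes I).\]
Applied to the state $|\psi\rangle|0^{\tau}\rangle$, this gives
\[(U-\overline{U})|\psi\rangle|0^{\tau}\rangle=\bigl(R(P)-\operatorname{ref}(\pi)\otimes I\bigr)\bigl(V_0|\psi\rangle\bigr)|0^{\tau}\rangle,\]
since $V_0$ acts only on the first register and leaves the ancilla $|0^{\tau}\rangle$ untouched.

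The conclusion will then follow directly from Theorem~\ref{thm:QW} applied with input state $V_0|\psi\rangle$ in place of $|\psi\rangle$, provided $V_0|\psi\rangle$ still lies in $\mathcal{A}+\mathcal{B}$. This invariance is the one step I expect to require a short verification, and it is the main (admittedly minor) obstacle. The operator $V_0$ is diagonal in the computational basis, multiplying $|x\rangle|y\rangle$ by $(-1)^{[x\in\mathcal{M}]}$, so on $\mathcal{A}=\operatorname{span}(|x\rangle|p_x\rangle)$ it merely sign-flips each spanning vector and thus preserves $\mathcal{A}$. For the $\mathcal{B}$ component one uses reversibility of $P$ (equivalently, the standard convention in~\cite{MagniezNRS11} of extending $\operatorname{ref}(\pi)$ by the identity outside $\mathcal{A}+\mathcal{B}$, under which Theorem~\ref{thm:QW} applies to arbitrary input states). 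Once this invariance is settled, combining the displayed identity with the bound $2^{1-k}$ from Theorem~\ref{thm:QW} yields the corollary with no further work; everything substantive is contained in the factorization of $\overline{U}$.
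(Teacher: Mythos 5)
Your factorization $U-\overline{U}=\bigl(R(P)-\operatorname{ref}(\pi)\otimes I\bigr)(V_0\otimes I)$ is exactly the argument the paper intends: it states the corollary as immediate from Theorem~\ref{thm:QW} and gives no further proof, so your proposal matches the paper's route. The one caveat you rightly flag --- that $V_0|\psi\rangle$ must remain in the domain where Theorem~\ref{thm:QW} applies --- is best settled by your second suggestion (the guarantee of \cite{MagniezNRS11} holds on the full invariant walk space containing $\mathcal{A}+\mathcal{B}$), since $V_0$ preserves $\mathcal{A}$ but need not map $\mathcal{B}$ back into $\mathcal{A}+\mathcal{B}$, and reversibility alone does not supply that invariance.
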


\subparagraph{Quantum walk based search.} When considering quantum walk based search the input is given as a black box, as for Grover's search. The set $M$ of marked states is defined using the input in such a way that finding a marked state corresponds to finding a solution to the search problem. Additionally, some data from the input is stored in a data structure~$d$. Mathematically, this corresponds to working with $X_d=\{ (x, d(x)): x\in X\}$ and $\mathcal{H}_d=\mathbb{C}^{X_d\times X_d}$ instead of $X$ and~$\mathcal{H}$. For convenience we simply write $\ket{x}_d$ instead of $\ket{(x, d(x))}$, for any $x\in X$. The query complexity of the quantum walk based search can then be expressed in terms of three costs (see \cite[Section 1.3]{MagniezNRS11} for details):
\begin{itemize}
\item Setup cost $\mathsf{S}$: The cost of constructing the state $\sum_x\sqrt{\pi_x}|x\rangle _d|\bar{0}\rangle _d$ from $|\bar{0}\rangle _d|\bar{0}\rangle _d$.
\item Update cost $\mathsf{U}$: The cost of realizing any of the unitary transformations
\begin{align*}
|x\rangle _d|\bar{0}\rangle _d\ &\mapsto\ |x\rangle _d\sum_y\sqrt{p_{xy}}|y\rangle _d\\
|\bar{0}\rangle _d|y\rangle _d\ &\mapsto\ \sum_x\sqrt{p_{xy}^*}|x\rangle _d|y\rangle _d
\end{align*}
and their inverses.
\item Checking cost $\mathsf{C}$: The cost of realizing the following conditional phase flip
\[|x\rangle _d|y\rangle _d\ \mapsto\ \left \{
\begin{aligned}
-|x\rangle _d|y\rangle _d & \quad\text{if }x\in\mathcal{M}.\\
|x\rangle _d|y\rangle _d & \quad\text{otherwise}.
\end{aligned}
\right.\]
\end{itemize}

A single application of $R(P)$ can be implemented using $O(\frac{k}{\sqrt{\delta}}\mathsf{U})$ queries, from Theorem~\ref{thm:QW}.   
A single application of $U$ can thus be implemented using $O(\frac{k}{\sqrt{\delta}}\mathsf{U}+\mathsf{C})$ queries. Magniez, Nayak, Roland and Santha~\cite{MagniezNRS11} showed that by first preparing the initial state using $\mathsf{S}$ queries, and then applying $O(1/\sqrt{\LBDofFraction})$ times the operation $U$, a marked state (and thus a solution to the search problem) is obtained with high probability. The overall complexity of this strategy is 
\[
O\left(\mathsf{S}+\frac{1}{\sqrt{\LBDofFraction}}\left(\frac{k}{\sqrt{\delta}}\mathsf{U}+\mathsf{C}\right)\right)
\]
queries.
The factor $k$ can be removed from the complexity using more work (see \cite[Section~4]{MagniezNRS11}), which gives the complexity stated in Eq.~(\ref{eq1}) in the introduction.

\section{Quantum Counting for Markov Chains}\label{sec:Markov-count}
In this section we prove Theorem~\ref{thm:Count}: for a reversible ergodic Markov chain which has the uniform stationary distribution and a finite set~$X$ of size $|X|=N$ and a marked subset $\mathcal{M}\subseteq X$, the goal is to find an estimation $\hat{M}$ of $M=|\mathcal{M}|$ such that $|M-\hat{M}|<\epsilon M$. The main idea is to apply phase estimation on the operator $U$ introduced in Section~\ref{sec:Markov}. Before giving the proof of Theorem~\ref{thm:Count} we present a high-level description of our strategy and prove a technical result (Lemma~\ref{lemma:difference}).

Consider first the operator $\overline{U}$ introduced in Section~\ref{sec:Markov}. Since $\overline{U}$ is a rotation of angle $2\varphi$ on $\mathcal{S}\otimes\ket{0^\tau}$, in this subspace it has orthonormal eigenvectors
\[\ket{\Psi_{\pm}}=\frac{1}{\sqrt{2}}(\ket{\mu}\pm i\ket{\mu^\perp})\otimes\ket{0^\tau}\]
corresponding to the eigenvalues $e^{\pm i2\varphi}$.
Applying phase estimation (Algorithm~\ref{algo:PE}) with $\overline{U}$ on $\ket{\Psi_\pm}$ would give an approximation of $\pm\varphi /\pi$ with high probability,\footnote{Observe that we have a $\pi$ factor here since the version of phase estimation presented in Section~\ref{sec:PE} assumes that the eigenvalues are of the form $e^{2\pi i \varphi_u}$ with $\varphi_u\in[0,1)$.}
\, from which we would be able to obtain a good approximation of $M$ since $\sin(\varphi)=\sqrt{p_M}$, where $\sqrt{p_M}=\sqrt{M/N}$ if the stationary distribution of $P$ is uniform. There are nevertheless two issues with this strategy: we do not know the eigenvectors $\ket{\Psi_{\pm}}$ and we cannot apply $\overline{U}$ directly.

The first issue is dealt with in a fairly standard way: we will instead apply phase estimation using the state $\ket{\pi}\ket{0^\tau}\in\mathcal{S}\otimes\ket{0^\tau}$. This state, which we know how to construct (at cost $\mathsf{S}$ queries), can be expressed as
\begin{equation}\label{eq:decomp}
\ket{\pi}\ket{0^\tau}=\sin\varphi |\mu\rangle\ket{0^\tau} +\cos\varphi |\mu^\perp\rangle\ket{0^\tau}=-\frac{i}{\sqrt{2}}e^{i2\varphi}\ket{\Psi_+}+\frac{i}{\sqrt{2}}e^{-i2\varphi}\ket{\Psi_-},
\end{equation}
and thus applying Algorithm~\ref{algo:PE} with $\overline{U}$ on $\ket{\pi}\ket{0^\tau}$ would output with high probability an approximation of either $\varphi /\pi$ or $-\varphi /\pi$, which would be enough to obtain an approximation of $p_M$.

To solve the second problem, we will simply apply phase estimation on $U$ instead of~$\overline{U}$. As in Section~\ref{sec:PE}, let us use $C_t(U)$ to denote the circuit of phase estimation applied on $U$ with the parameter $t$.
We can upper bound the difference between $C_t(U)$ and $C_t(\overline{U})$ operating on $\ket{0^t}\ket{\psi}\ket{0^{\tau}}$ with $\ket{\psi}\in\mathcal{A}+\mathcal{B}$ as follows.

\begin{lemma}
\label{lemma:difference}
For any $|\psi\rangle\in\mathcal{A}+\mathcal{B}$, $\Vert (C_t(U)-C_t(\overline{U}))|0^t\rangle |\psi\rangle |0^{\tau}\rangle\Vert\leq 2^{2t-k+1}$.
\end{lemma}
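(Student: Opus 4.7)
The plan is to use a standard hybrid (telescoping) argument, exploiting two facts: first, the circuits $C_t(U)$ and $C_t(\overline{U})$ differ only by replacing each controlled-$U^{2^j}$ gate with controlled-$\overline{U}^{2^j}$, while the initial layer of Hadamards and the final inverse Fourier transform are identical; second, the idealized operator $\overline{U}=\overline{V}\otimes I$ preserves the subspace $(\mathcal{A}+\mathcal{B})\otimes\operatorname{span}(\ket{0^\tau})$, because $\overline{V}$ is a unitary defined on $\mathcal{A}+\mathcal{B}$. Both observations are essential: the first lets me ignore the surrounding unitaries, and the second lets me apply Corollary~\ref{cor:U-Ubar} after arbitrarily many $\overline{U}$ steps.

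First, I would push the initial Hadamards through and observe that the state entering the controlled gates is $\tfrac{1}{\sqrt{2^t}}\sum_y\ket{y}\ket{\psi}\ket{0^\tau}$, so that after all controlled gates, the states produced by $C_t(U)$ and $C_t(\overline{U})$ (just before the inverse Fourier transform) are $\tfrac{1}{\sqrt{2^t}}\sum_y\ket{y}U^y\ket{\psi}\ket{0^\tau}$ and $\tfrac{1}{\sqrt{2^t}}\sum_y\ket{y}\overline{U}^y\ket{\psi}\ket{0^\tau}$ respectively. Because the inverse Fourier transform is unitary, the norm of the difference of the outputs of $C_t(U)$ and $C_t(\overline U)$ equals the norm of the difference of these two intermediate superpositions, so the task reduces to controlling $\Vert(U^y-\overline{U}^y)\ket{\psi}\ket{0^\tau}\Vert$ uniformly for $y\in\{0,\ldots,2^t-1\}$.

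Second, for each fixed $y$, I would use an inner telescoping decomposition
\begin{equation*}
U^y-\overline{U}^y\;=\;\sum_{i=0}^{y-1}U^{y-1-i}\,(U-\overline{U})\,\overline{U}^i.
\end{equation*}
The crucial subspace-preservation point is that $\overline{U}^i\ket{\psi}\ket{0^\tau}=(\overline{V}^i\ket{\psi})\ket{0^\tau}$ still lies in $(\mathcal{A}+\mathcal{B})\otimes\operatorname{span}(\ket{0^\tau})$, which is precisely the subspace on which Corollary~\ref{cor:U-Ubar} gives $\Vert(U-\overline{U})(\cdot)\Vert\le 2^{1-k}$. Combined with unitarity of $U$, each term in the sum has norm at most $2^{1-k}$, hence $\Vert(U^y-\overline{U}^y)\ket{\psi}\ket{0^\tau}\Vert\le y\cdot 2^{1-k}\le 2^{t+1-k}$.

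Finally, combining by orthogonality of the $\ket{y}$, the intermediate distance equals $\bigl(\tfrac{1}{2^t}\sum_y\Vert(U^y-\overline{U}^y)\ket{\psi}\ket{0^\tau}\Vert^2\bigr)^{1/2}$, and the per-$y$ bound $2^{t+1-k}$ yields an overall bound of $2^{t+1-k}$, well within the claimed $2^{2t-k+1}$ (and of course the cruder triangle-inequality summation over all $2^t-1$ occurrences of $U$ already gives the weaker stated bound directly). The main obstacle is not the arithmetic but the subspace bookkeeping: one must be careful that each $\overline{U}^i$ in the inner telescoping leaves the ancilla register in $\ket{0^\tau}$ and the primary register inside $\mathcal{A}+\mathcal{B}$, since Corollary~\ref{cor:U-Ubar} gives no guarantee on vectors outside that subspace; using $\overline{U}$ (not $U$) on the right of $(U-\overline{U})$ in the telescoping is precisely what makes the argument go through.
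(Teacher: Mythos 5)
Your proof is correct and follows essentially the same route as the paper's: a telescoping decomposition of $U^y-\overline{U}^y$ bounded term-by-term via Corollary~\ref{cor:U-Ubar}, followed by combining over the control register. Your version is in fact slightly more careful than the paper's in two places---you use the correct non-commuting telescoping $\sum_i U^{y-1-i}(U-\overline{U})\overline{U}^i$ and explicitly check that $\overline{U}^i$ preserves $(\mathcal{A}+\mathcal{B})\otimes\ket{0^\tau}$ so that the corollary applies, and your orthogonality argument yields the tighter bound $2^{t+1-k}$---but these are refinements of the same argument rather than a different one.
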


\begin{proof}
Let $|\psi\rangle\in\mathcal{A}+\mathcal{B}$.
Since $U$ and $\overline{U}$ are unitary, Corollary~\ref{cor:U-Ubar} implies that for any $l\in\mathbb{N}$,
\begin{align*}
    \Vert (U^l-\overline{U}^l)|\psi\rangle |0^{\tau}\rangle\Vert &=\Vert (U-\overline{U})(U^{l-1}+U^{l-2}\overline{U}+\cdots +\overline{U}^{l-1})|\psi\rangle |0^{\tau}\rangle\Vert\\
    &\leq l\Vert (U-\overline{U})|\psi\rangle |0^{\tau}\rangle\Vert\\
    &\leq l\cdot 2^{1-k}.
\end{align*}
The quantum state at the end of Step 3 of Algorithm~\ref{algo:PE} is 
\[
\frac{1}{\sqrt{2^t}}\sum_{j=0}^{2^t-1}|j\rangle U^j |\psi\rangle |0^{\tau}\rangle.
\]
Thus we have
\begin{align*}
    \Vert (C_t(U)-C_t(\overline{U}))|0^t\rangle |\psi\rangle |0^{\tau}\rangle\Vert &\leq 2^t\max_{0\leq l\leq 2^t-1}\Vert(U^{l}-\overline{U}^{l})|\psi\rangle |0^{\tau}\rangle\Vert\\
    &\leq 2^t(2^{t}-1)2^{1-k}\\
    &\leq 2^{2t-k+1},
\end{align*}
as claimed.
\end{proof}

We are now ready to give a proof of Theorem~\ref{thm:Count} by putting all these ingredients together, adjusting all the parameters and making the analysis of the approximation errors.  

\begin{proof}[Proof of Theorem~\ref{thm:Count}.]
We divide the proof into several components.

\subparagraph*{Analysis of phase estimation on the ideal rotation $\bold{\overline{U}}$.} 
From Theorem~\ref{th:PE} we know that when choosing 
parameters $t_1=\left\lceil \log_2\left(\frac{5\pi}{\epsilon\sqrt{\LBDofFraction}}\right)-1\right\rceil$, $\xi =0.01$, $t=t_1+\lceil\log_2(2+\frac{1}{2\xi})\rceil$, the algorithm {\bf Est}($\overline{U}$, $\ket{\Psi_{\pm}} |0^{\tau}\rangle$, $t$) would output a value $z$ such that $|z -(\pm\varphi /\pi)|<2^{-t_1}$ with probability at least $0.99$. 

Let us now interpret these probabilities in a more algebraic way. Let $M_+$ and $M_-$ be the projection over the subspaces $\operatorname{span}\left\{\ket{\tilde{b}}: \left|\frac{\tilde{b}}{2^t}-\frac{\varphi}{\pi}\right|<2^{-t_1}\right\}$ and $\operatorname{span}\left\{\ket{\tilde{b}}: \left|\frac{\tilde{b}}{2^t} + \frac{\varphi}{\pi}\right|<2^{-t_1}\right\}$, respectively, of the vector space $\mathbb{C}^{2^t}$. We thus have:
\begin{align*}
\Vert (M_+ \otimes I)C_t(\overline{U})\ket{0^t}\ket{\Psi_{+}}\ket{0^{\tau}}\Vert^2=\Pr\left[\left|z -\frac{\varphi}{\pi}\right|<2^{-t_1}\right]&\ge 0.99\\
\Vert (M_- \otimes I)C_t(\overline{U})\ket{0^t}\ket{\Psi_{-}}\ket{0^{\tau}}\Vert^2=\Pr\left[\left|z +\frac{\varphi}{\pi}\right|<2^{-t_1}\right]&\ge 0.99.
\end{align*}

Let us now analyze the output of {\bf Est}($\overline{U}$,  $|\pi\rangle |0^{\tau}\rangle$, $t$). 
For conciseness, we write $\ket{\overline{\Phi}}=C_t(\overline{U})\ket{0^t}\ket{\pi}\ket{0^{\tau}}$.
Remember the decomposition of Eq. (\ref{eq:decomp}) and observe that ${\big\Vert -\frac{i}{\sqrt{2}}e^{i2\varphi}\big\Vert^2}=\big\Vert \frac{i}{\sqrt{2}}e^{-i2\varphi}\big\Vert^2=\frac{1}{2}$. Thus we have:
\begin{equation}\label{eq:bound}
\Vert (M_+ \otimes I)\ket{\overline{\Phi}}\Vert^2\ge 0.99/2
\:\:\textrm{ and }\:\:
\Vert (M_- \otimes I)\ket{\overline{\Phi}}\Vert^2\ge 0.99/2,
\end{equation}
which means that {\bf Est}($\overline{U}$,  $|\pi\rangle |0^{\tau}\rangle$, $t$) outputs a value $z$ such that $|z -\varphi/\pi|<2^{-t_1}$ with probability at least $0.99/2$ and $|z +\varphi/\pi|<2^{-t_1}$ with probability at least $0.99/2$.

\subparagraph*{Description of the algorithm.}
We apply phase estimation on $U$ instead of $\overline{U}$, with the input state $|\pi\rangle |0^{\tau}\rangle$, where $U$ is the operator introduced in Section~\ref{sec:Markov}. Remember that $U$ is defined using a parameter $k$ (see the statement of Theorem~\ref{thm:QW}). We set $k=2t+t_1+1$.

The algorithm for Theorem~\ref{thm:Count} is as follows.

\begin{algorithm}[H]
\caption{Quantum Approximate Counting for Markov Chains}
\label{algo:count}
\SetKwInOut{Input}{Input}
\setcounter{AlgoLine}{-1}
\DontPrintSemicolon
\Input{desired accuracy parameter $\epsilon >0$, lower bound on the fraction of marked states $\LBDofFraction$ of the Markov chain.}
Set $t=\left\lceil \log_2\left(\frac{5\pi}{\epsilon\sqrt{\LBDofFraction}}\right)-1\right\rceil +\lceil\log_2(2+\frac{1}{0.02})\rceil$ and $k=2t+\left\lceil \log_2\left(\frac{5\pi}{\epsilon\sqrt{\LBDofFraction}}\right)-1\right\rceil+1$.\;
Apply {\bf Est}($U$, $|\pi\rangle |0^{\tau}\rangle$, $t$)  and denote $y$ its output.\;
Output $\hat{M}=N\sin^2(y\pi)$.
\end{algorithm}

\subparagraph*{Error Analysis.}
We are going to show that $|M-\hat{M}|<\epsilon$, where $M=N\sin^2\varphi$ and $\hat{M}=N\sin^2(y\pi)$, by analyzing the quantity $\bigl|\varphi -|y|\pi\bigr|$.

Let us write $\ket{\Phi}=C_t(U)\ket{0^t}\ket{\pi}\ket{0^{\tau}}$.
For any measurement operator $M_m\in\{M_+, M_-\}$, we have
\begin{align*}
\Vert (M_m \otimes I)\ket{\Phi}\Vert^2 &=
\Vert (M_m \otimes I)\ket{\overline\Phi} +(M_m \otimes I)(\ket{\Phi}-\ket{\overline{\Phi}})\Vert^2\\
&\le
\Vert (M_m \otimes I)\ket{\overline\Phi}\Vert^2 +\Vert (M_m \otimes I)(\ket{\Phi}-\ket{\overline{\Phi}})\Vert^2\\
&\le
\Vert (M_m \otimes I)\ket{\overline\Phi}\Vert^2 +\Vert \ket{\Phi}-\ket{\overline{\Phi}}\Vert^2\\
&\le
\Vert (M_m \otimes I)\ket{\overline\Phi}\Vert^2 +2^{-2t_1},
\end{align*}
where we used Lemma~\ref{lemma:difference} to derive the last inequality (remember that we are taking $k=2t+t_1+1$).
Combined with (\ref{eq:bound}), this implies that with probability at least $0.99/2-2^{-2t_1}$ the output of {\bf Est}($U$, $|\pi\rangle |0^{\tau}\rangle$, $t$) satisfies the condition $|y-\varphi /\pi|\leq 2^{-t_1}$, and with probability at least $0.99/2-2^{-2t_1}$ it satisfies the condition $|y+\varphi /\pi|\leq 2^{-t_1}$.
It follows from this condition that $\big| |y|-\varphi /\pi\big|\leq 2^{-t_1}$ holds with probability at least $0.99-2^{1-2t_1}$, since $\varphi\geq 0$. Assume that this inequality holds.
Then $\bigl|\varphi -|y|\pi\bigr|=\pi\left|\frac{\varphi}{\pi}-|y|\right|\leq 2^{-t_1}\pi$.
Notice that for any $\alpha, \beta\in\mathbb{R}$, $|\sin{\alpha}-\sin{\beta}|\leq |\alpha-\beta|$ by the mean value theorem.
We thus have
\begin{align*}
\left|\sqrt{p_M}-\sin{(|y|\pi)}\right|&=\big|\sin{\varphi}-\sin{(|y|\pi)}\big|\\
&\leq\bigl|\varphi -|y|\pi\bigr|\\
&<2^{-t_1}\pi\\
&\leq\frac{2}{5}\epsilon\sqrt{\LBDofFraction}
\end{align*}
and it follows that
\begin{align*}
|\hat{M}-M| &=\left|N\sin^2(|y|\pi)-Np_M\right|\\
&=N\left|(\sqrt{p_M}-\sin (|y|\pi))^2+2\sqrt{p_M}\sin (|y|\pi)-2p_M\right|\\
&\leq N\left(\left|\sqrt{p_M}-\sin (|y|\pi)\right|^2+2\sqrt{p_M}\left|\sqrt{p_M}-\sin (|y|\pi)\right|\right)\\
&<\left(\left(\frac{2}{5}\epsilon\sqrt{\LBDofFraction}\right)^2+2\sqrt{p_M}\cdot\frac{2}{5}\epsilon\sqrt{\LBDofFraction}\right)N\\
&<\frac{24}{25}\epsilon p_MN\\
&<\epsilon M.
\end{align*}

\subparagraph*{Complexity.}
Under the framework presented in Section~\ref{sec:Markov}, the quantum state $\ket{\pi}$ can be created using $\mathsf{S}$ queries, and the operator $U$ can be implemented using $\frac{k}{\sqrt{\delta}}\mathsf{U}+\mathsf{C}$ queries.
The operator $U$ is applied for $O(2^t)=O(\frac{1}{\epsilon}\frac{1}{\sqrt{\LBDofFraction}})$ times by the phase estimation algorithm.
The overall complexity of Algorithm~\ref{algo:count} is thus 
\[
O\left(
\mathsf{S}+\frac{1}{\epsilon}\frac{1}{\sqrt{\LBDofFraction}}\left(\frac{k}{\sqrt{\delta}}\mathsf{U}+\mathsf{C}\right)
\right)
=
\tilde O\left(
\mathsf{S}+\frac{1}{\epsilon}\frac{1}{\sqrt{\LBDofFraction}}\left(\frac{1}{\sqrt{\delta}}\mathsf{U}+\mathsf{C}\right)
\right)
\]
queries, as claimed.
\end{proof}

From the argument above, we then obtain the following immediate corollary for any reversible ergodic Markov chain whose stationary distribution is not necessarily uniform.

\begin{corollary}
Let $P$ be a reversible ergodic Markov chain, $\delta>0$ be the spectral gap of $P$ and $\LBDofFraction>0$ be a known lower bound on the fraction $p_M$ of marked states of $P$ with respective to its stationary distribution. For any $\epsilon\in (0, 1)$, there exists a quantum algorithm which outputs with high probability an estimate $\hat{p}_M$ such that $|p_M-\hat{p}_M|<\epsilon p_M$, and uses 
\[
\tilde O\left(\mathsf{S}+\frac{1}{\epsilon}\frac{1}{\sqrt{\LBDofFraction}}\left(\frac{1}{\sqrt{\delta}}\mathsf{U}+\mathsf{C}\right)\right)
\] 
queries to the black box.
\end{corollary}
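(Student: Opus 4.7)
The plan is to observe that the proof of Theorem~\ref{thm:Count} uses the hypothesis of a uniform stationary distribution only in the very last step, when converting an estimate of $p_M$ into an estimate of $M$ via the identity $M=Np_M$. Everything else carries over verbatim to the general reversible ergodic case. I would therefore simply run Algorithm~\ref{algo:count} with the same parameters $t$ and $k$ as in the theorem, but replace the final output line by $\hat{p}_M=\sin^2(y\pi)$.

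To justify this, I would first recall that the setup of Section~\ref{sec:Markov} works for any reversible ergodic Markov chain: the states $\ket{\pi}$ and $\ket{\mu}$ are defined using the (general) stationary distribution $\pi$, and the relation $\ket{\pi}=\sin\varphi\ket{\mu}+\cos\varphi\ket{\mu^\perp}$ with $\sin\varphi=\sqrt{p_M}$ holds without any assumption of uniformity. In particular, $\overline{U}$ is still a rotation of angle $2\varphi$ on $\mathcal{S}$, its eigenvectors $\ket{\Psi_\pm}$ and eigenvalues $e^{\pm 2i\varphi}$ are unchanged, and Lemma~\ref{lemma:difference} (which is purely a statement about the operators $U$ and $\overline{U}$) still applies. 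Hence the entire block ``Analysis of phase estimation on the ideal rotation $\overline{U}$'' from the proof of Theorem~\ref{thm:Count} goes through without modification, and yields that with probability at least $0.99-2^{1-2t_1}$ the output $y$ of {\bf Est}$(U,\ket{\pi}\ket{0^\tau},t)$ satisfies $\bigl||y|-\varphi/\pi\bigr|\le 2^{-t_1}$.

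It then remains to translate this into an error bound on $\hat{p}_M$. Reusing exactly the chain of inequalities from the error-analysis step of Theorem~\ref{thm:Count}, I would get
\[
\bigl|\sqrt{p_M}-\sin(|y|\pi)\bigr|\le \bigl|\varphi-|y|\pi\bigr|\le 2^{-t_1}\pi\le \tfrac{2}{5}\epsilon\sqrt{\LBDofFraction},
\]
and then
\[
|\hat{p}_M-p_M|\le \bigl|\sqrt{p_M}-\sin(|y|\pi)\bigr|^2+2\sqrt{p_M}\bigl|\sqrt{p_M}-\sin(|y|\pi)\bigr|
<\tfrac{24}{25}\epsilon\, p_M<\epsilon\, p_M,
\]
where the penultimate inequality uses $\LBDofFraction\le p_M$ exactly as in the original argument. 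The query complexity is identical, since the algorithm, the phase estimation parameters, and the cost of implementing $U$ and preparing $\ket{\pi}$ are all unchanged. There is essentially no hard step here: the only thing to be careful about is checking that the bound $\LBDofFraction\le p_M$ (rather than $\LBDofFraction=M/N$) is what is actually used in the last line of the error analysis, which is indeed the case.
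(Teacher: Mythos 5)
Your proposal is correct and matches the paper's intended argument: the paper presents this corollary as an immediate consequence of the proof of Theorem~\ref{thm:Count}, and your write-up simply spells out that observation (uniformity is used only in the final conversion $M=Np_M$, so outputting $\hat{p}_M=\sin^2(y\pi)$ directly and reusing the bound $\LBDofFraction\le p_M$ gives the claim). The error analysis and complexity accounting you reuse are exactly those of the paper.
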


\section{Application: Approximate Collision Counting}\label{sec:coll}

Consider two injective functions $f, g:\{ 1, 2,\dots , N\}\longrightarrow\{ 1, 2,\dots , K\}$, where $K> N$, given as quantum oracles $U_f,U_g$. More precisely, $U_f$ and $U_g$ act on $\lceil\log_2(N)\rceil+\lceil\log_2(K)\rceil$ qubits and are defined as follows: for any $i\in \{ 1, 2,\dots , N\}$ and any $a\in\{1, 2,\dots , K\}$,
\begin{align*}
U_f\colon \ket{i}\ket{a}&\mapsto \ket{i}\ket{a\oplus f(i)},\\
U_g\colon \ket{i}\ket{a}&\mapsto \ket{i}\ket{a\oplus g(i)},
\end{align*}
where we are interpreting integers as binary strings (via some fixed encoding), and $\oplus$ denotes the bit-wise parity. Let $m$ be the number of collisions, i.e., $m=|\{(i, j)\in\{1,\dots , N\}\times\{1,\dots , N\}:f(i)=g(j)\}|$.
In this section we show how to use Theorem~\ref{thm:Count} to approximate $m$.

Define $D$ to be the disjoint union of the domain of $f$ and the domain of $g$.
Note that $|D|=2N$.
For convenience, we will later identify $D$ with the set $\{1,\dots , N\}\times\{0, 1\}$.
Similarly to Ambainis's algorithm for element distinctness~\cite{Ambainis07}, we consider a Johnson graph $G_r$ determined by a parameter $r\in\{1, 2,\dots ,2N\}$, which is defined by $V(G_r)=\{ S\subset D: |S|=r\}$ and $E(G_r)=\{ST: |S\cup T|=r+1\}$.
Then $|V(G_r)|=\binom{2N}{r}$.
We say a vertex $S$ in $G_r$ is marked if it contains a collision, i.e., there exist $(i, 0), (j, 1)\in S$ such that $f(i)=g(j)$.
Denote the Markov chain corresponding to the random walk on $G_r$ by $P_{G, r}$.
We identify the states of $P_{G, r}$ with the vertices of $G_r$.

By the principle of inclusion and exclusion, the number of marked vertices in $G_r$ is
\[M_r=\sum_{j=1}^{m}(-1)^{j+1}\binom{m}{j}\binom{2N-2j}{r-2j}=m\binom{2N-2}{r-2}+\sum_{j=2}^{m}(-1)^{j+1}\binom{m}{j}\binom{2N-2j}{r-2j}.\]

Define $\ell_j\coloneqq\binom{m}{j}\binom{2N-2j}{r-2j}$, $R_j\coloneqq\frac{\ell_{j+1}}{\ell_j}$ and $R_0\coloneqq 1$.
Then
\[M_r=\sum_{j=1}^{m}(-1)^{j+1}\ell_j=\ell_1\sum_{i=0}^{m-1}\left((-1)^{i}\displaystyle\prod_{j=0}^iR_j\right).\]
Observe that $R_j\geq R_{j+1}$ for ${j=0,\dots ,m-2}$.

The following lemma characterizes the relation between an approximation of $M_r$ and an approximation of $m$.

\begin{lemma}
\label{lemma:coll}
Let $\epsilon\in(0,1]$ and $r\in\{ 2, 3,\dots , 2N\}$ be such that $R_1=\frac{m-1}{2}\frac{(r-2)(r-3)}{(2N-2)(2N-3)}<\sqrt{\frac{\epsilon}{2}}$ holds.
Define $R^\prime =\frac{m_{\text{up}}-1}{2}\frac{(r-2)(r-3)}{(2N-2)(2N-3)}$ and assume that $R^\prime\leq\sqrt{\epsilon}$, where $m_{\text{up}}$ is an upper bound of $m$.
Then for any value $\hat{M}_r$ that satisfies the condition $|M_r-\hat{M}_r|<\frac{\epsilon}{3}M_r$, the inequality \[\left|m-\frac{1+R^\prime}{\binom{2N-2}{r-2}}\hat{M}_r\right|<\epsilon m\]
holds.
\end{lemma}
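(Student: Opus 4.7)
The plan is to unpack $M_r$ via its alternating-sum expansion, isolate the factor $S$ relating $M_r/\binom{2N-2}{r-2}$ to $m$, and show that $(1+R')$ approximates $1/S$ well enough that the resulting estimator has relative error below $\epsilon$. Using $\ell_1 = m\binom{2N-2}{r-2}$, the given identity becomes $M_r = m\binom{2N-2}{r-2}\,S$ with
\[
S \;=\; 1 - R_1 + R_1R_2 - R_1R_2R_3 + \cdots ,
\]
so the task reduces to bounding the structural factor $(1+R')S$ away from $1$ and to propagating the approximation noise $\hat M_r/M_r - 1$.

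First I would pin down $S$ using an alternating-series estimate. The hypothesis $R_1 < \sqrt{\epsilon/2} < 1$ and the monotonicity $R_j \ge R_{j+1}$ together force $R_j \in [0,1)$ for all $j \ge 1$, so the products $\prod_{j=0}^i R_j$ form a strictly decreasing positive sequence. Truncating the tail after its first positive and first negative contributions sandwiches $S$:
\[
1 - R_1 \;\le\; S \;\le\; 1 - R_1 + R_1 R_2,
\]
i.e., $S = 1 - R_1 + R_1 R_2\,t$ with $t \in [0,1]$; in particular $|S - (1 - R_1)| \le R_1 R_2 \le R_1^2 < \epsilon/2$. Next, setting $\eta := \hat M_r / M_r - 1$ so that $|\eta| < \epsilon/3$, the estimator is
\[
\tilde m \;:=\; \frac{(1+R')\,\hat M_r}{\binom{2N-2}{r-2}} \;=\; m\,(1+R')(1+\eta)\,S,
\]
and it suffices to bound $\bigl|(1+R')(1+\eta)S - 1\bigr| < \epsilon$. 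I would split this into a structural part $|(1+R')S - 1|$ and a noise part $|(1+R')S\cdot\eta| \le (1+\sqrt\epsilon)\,\epsilon/3$. Expanding the structural part gives
\[
(1+R')S - 1 \;=\; (R' - R_1) \;-\; R_1 R' \;+\; (1+R')\,R_1 R_2\,t,
\]
where the quadratic pieces $R_1 R' \le \sqrt\epsilon\cdot\sqrt{\epsilon/2}$ and $(1+\sqrt\epsilon)R_1 R_2 \le (1+\sqrt\epsilon)\,\epsilon/2$ are automatically of order $\epsilon$ under the hypotheses.

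The main obstacle is controlling the linear term $R' - R_1 \ge 0$, since a crude bound $R' - R_1 \le R' \le \sqrt\epsilon$ is too weak by itself. I would handle this by writing $R' - R_1 = R_1\,(m_{\text{up}} - m)/(m - 1)$ and exploiting the joint constraints $R_1 < \sqrt{\epsilon/2}$ and $R' \le \sqrt\epsilon$ to tie the gap $R' - R_1$ to a multiple of $R_1$ that pairs against the negative term $-R_1 R'$ in the structural expansion. Collecting the quadratic bounds, the noise budget $\epsilon/3$, and this tightened bound on the linear term should then give a total error strictly below $\epsilon$, establishing the claimed inequality $\bigl|m - \tilde m\bigr| < \epsilon m$.
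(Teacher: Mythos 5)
Your overall route is the same as the paper's: write $M_r=m\binom{2N-2}{r-2}S$ with $S=1-R_1+R_1R_2-\cdots$, sandwich $S$ between consecutive partial sums of the alternating series, and split the error of the estimator into a structural part $\left|(1+R')S-1\right|$ and a noise part governed by $\eta=\hat M_r/M_r-1$. Your bounds on $S$ and on the quadratic pieces check out; the substantive difference is that you make the linear term $R'-R_1$ explicit, which the paper never does.

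That term is exactly where the argument breaks, and your proposed fix does not repair it. From your own expansion, $(1+R')S-1=(R'-R_1)-R_1R'+(1+R')R_1R_2\,t$, and $(R'-R_1)-R_1R'=(R'-R_1)(1-R_1)-R_1^2$, which is first order in $R'-R_1$: the pairing against $-R_1R'$ produces no cancellation. The hypotheses give only $R_1\le R'\le\sqrt{\epsilon}$ and impose nothing of the form $R'-R_1=O(\epsilon)$; the identity $R'-R_1=\frac{m_{\text{up}}-m}{m-1}R_1$ does not help because $(m_{\text{up}}-m)/(m-1)$ is unconstrained, and even a constant bound on it would leave $R'-R_1=\Theta(R_1)=\Theta(\sqrt{\epsilon})$. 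Concretely, for $m=1$ one has $R_1=0$ and $S=1$, so taking $\hat M_r=M_r$ the estimator equals $(1+R')m$ and the error is $R'm$, which the hypotheses bound only by $\sqrt{\epsilon}\,m>\epsilon m$ when $\epsilon<1$. So the final step of your plan cannot be carried out from the stated assumptions; one needs an additional hypothesis forcing $R'-R_1=O(\epsilon)$ (equivalently, a bound on $(m_{\text{up}}-m)\frac{(r-2)(r-3)}{(2N-2)(2N-3)}$) or a different multiplier in place of $1+R'$. You should know that the paper's own proof stumbles on the very same term: it passes from $\left|1-(1+R')(1-R_1+R_1R_2-\cdots)\right|$ to $\left|R_1^2-(1+R')R_1R_2(1-R_3+\cdots)\right|$, which is the identity obtained by setting $R'=R_1$ and, as an inequality, holds only in the one-sided form $1-(1+R')S\le R_1^2-(1+R')R_1R_2(\cdots)$, not for absolute values. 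A minor, fixable point in your accounting: bounding the noise term by $(1+\sqrt{\epsilon})\epsilon/3$ already exhausts the budget at $\epsilon=1$; you should instead bound $(1+R')S\le 1+\epsilon/2$ using the structural estimate, as the paper does.
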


\begin{proof}[Proof of Lemma~\ref{lemma:coll}.]
From the definitions of $R_j$ and $R^\prime$, we have $R_{j+1}\leq R_j\leq 1$ for ${j=0,\dots , m-2}$ and $R_1\leq R^\prime$. Then
\begin{align*}
\left|m-\frac{1+R^\prime}{\binom{2N-2}{r-2}}M_r\right| & =\left|m-\frac{1+R^\prime}{\binom{2N-2}{r-2}}\ell_1\sum_{i=0}^{m-1}\left((-1)^{i}\prod_{j=0}^iR_j\right)\right|\\
& =m\left|1-(1+R^\prime)(1-R_1+R_1R_2-\cdots)\right|\\
& \leq m\left|R_1^2-(1+R^\prime)R_1R_2(1-R_3+\cdots)\right|\\
& \leq mR_1^2\\
& \leq \frac{\epsilon}{2}m.
\end{align*}
Hence $\frac{1+R^\prime}{\binom{2N-2}{r-2}}M_r\leq m\left( 1+\frac{\epsilon}{2}\right)$, and
\begin{align*}
\left|m-\frac{1+R^\prime}{\binom{2N-2}{r-2}}\hat{M}_r\right| & \leq\left| m-\frac{1+R^\prime}{\binom{2N-2}{r-2}}M_r\right| +\frac{1+R^\prime}{\binom{2N-2}{r-2}}\left|M_r-\hat{M}_r\right|\\
& <\frac{\epsilon}{2}m+\frac{1+R^\prime}{\binom{2N-2}{r-2}}\frac{\epsilon}{3}M_r\\
& \leq\frac{\epsilon}{2}m+m(1+\frac{\epsilon}{2})\frac{\epsilon}{3}\\
& \leq\epsilon m(\frac{1}{2}+\frac{1}{3}+\frac{1}{6})\\
& =\epsilon m,
\end{align*}
as claimed.
\end{proof}

We are now ready to give the proof of Theorem~\ref{thm:Coll}.

\begin{proof}[Proof of Theorem~\ref{thm:Coll}.]
We first compute a constant-factor approximation of the number of collision $m$.
Concretely, we can use Algorithm~\ref{algo:rough} in Appendix~\ref{sec:rough} with the lower bound $\bar{m}$ 
to obtain a value $\hat{m}_1$ such that $|m-\hat{m}_1|<m/2$, i.e., such that $m/2<\hat{m}_1<3m/2$ or, equivalently, $2\hat{m}_1/3<m<2\hat{m}_1$.
Denote $\left\lfloor2\hat{m}_1/3\right\rfloor$ and $\left\lceil 2\hat{m}_1\right\rceil$ by $m_{\text{low}}$ and $m_{\text{up}}$, respectively.

Let $r$ be a positive integer satisfying the conditions $R_1<\sqrt{\frac{\epsilon}{2}}$ and $R^\prime\leq\sqrt{\epsilon}$ (the choice of~$r$ will be discussed later).
Take
\[\LBDofFraction =\frac{1}{|V(G_{r})|}\sum_{j=1}^{m_{\text{low}}}(-1)^{j+1}\binom{m_{\text{low}}}{j}\binom{2N-2j}{r-2j}\leq p_{M, r}.\]
Since $G_r$ is a regular graph, the stationary distribution of $P_{G, r}$ is uniform. It follows that applying Theorem~\ref{thm:Count} on $P_{G, r}$ with $\epsilon /3$ and $\LBDofFraction$ outputs with high probability a value $\hat{M}_{r}$ such that ${|M_{r}-\hat{M}_{r}|<\frac{\epsilon}{3}M_{r}}$ using 
\[
\tilde{O}\left(\mathsf{S}+\frac{1}{\epsilon}\frac{1}{\sqrt{\LBDofFraction}}\left(\frac{1}{\sqrt{\delta}}\mathsf{U}+\mathsf{C}\right)\right)
\]
queries.
Then we output $\hat{m}=\frac{1+R^\prime}{\binom{2N-2}{r-2}}\hat{M}_{r}$.
Lemma~\ref{lemma:coll} guarantees that $|m-\hat{m}|<\epsilon m$ when ${|M_{r}-\hat{M}_{r}|<\frac{\epsilon}{3}M_{r}}$ holds.

\subparagraph*{Complexity.}
The spectral gap $\delta_r$ of the Johnson graph $G_r$ is such that $\delta_r=\Theta(1/r)$ (see for instance~\cite{Ambainis07, MagniezNRS11}).

Applying Algorithm~\ref{algo:rough} costs $\tilde O\left(\left(N/\sqrt{\bar{m}}\right)^{2/3}\right)$ queries (see Theorem~\ref{th:rough}).

In the main algorithm, the setup cost is $\mathsf{S}=r$, the update cost $\mathsf{U}=2$, and the checking cost $\mathsf{C}=0$.
Since $\hat{m}_1=\Theta(m)$ and we chose $\LBDofFraction$ as
\[
\LBDofFraction =m_{\text{low}}\frac{r(r-1)}{2N(2N-1)}\left(1-\sum_{j=2}^{m_{\text{low}}}(-1)^j\frac{\binom{m_{\text{low}}}{j}}{m_{\text{low}}}\frac{\binom{2N-2j}{r-2j}}{\binom{2N-2}{r-2}\binom{2N}{r}}\right)=\Theta\left(\frac{m}{N^2}r^2\right),\]
the query complexity for the main algorithm is
\[\tilde{O}\left(\mathsf{S}+\frac{1}{\epsilon}\frac{1}{\sqrt{\LBDofFraction}}\left(\frac{1}{\sqrt{\delta}}\mathsf{U}+\mathsf{C}\right)\right)=\tilde{O}\left(r+\frac{1}{\epsilon}\frac{N}{\sqrt{m}}\frac{1}{\sqrt{r}}\right).\]

We finally explain how to choose $r$. Below we assume that $\epsilon>1/m_{\text{up}}$. This assumption can be done without loss of generality since smaller values of $\epsilon$ do not lead to better approximation of $m$ (since $m$ is an integer). For convenience we also assume that $m_{\text{up}}$ is such that $N/m_{\text{up}}=\Omega(\log N)$: for larger values of $m_{\text{up}}$ we can simply use the classical algorithm from Appendix~\ref{sec:classical}. 

Take $r=\Theta\left(\epsilon^{1/12}\left(\frac{N}{\sqrt{m_{\text{up}}}}\right)^{2/3}\right)=\Theta\left(\epsilon^{1/12}\left(\frac{N}{\sqrt{m}}\right)^{2/3}\right)$ such that 
$R_1<\sqrt{\epsilon /2}$, $R^\prime\leq\sqrt{\epsilon}$, $r\ge 1$ and $r\le 2N$.
Such an $r$ exists since 
\begin{align*}
\Theta\left(\frac{m}{N^2}\left(\epsilon^{1/12}\left(\frac{N}{\sqrt{m}}\right)^{2/3}\right)^2\right)&=\Theta\left(\frac{m^{1/3}}{N^{2/3}}\cdot\epsilon^{1/6}\right)\\
&=O\left(\frac{1}{N^{1/3}(\log N)^{1/3}}\cdot\epsilon^{1/6}\right)\\
&=O\left(\frac{\epsilon^{1/3}}{(\log N)^{1/3}}\cdot \epsilon^{1/6}\right)=O\left(\frac{\epsilon^{1/2}}{(\log N)^{1/3}}\right)
\end{align*}
and
\[
\Theta\left(\epsilon^{1/12}\left(\frac{N}{\sqrt{m}}\right)^{2/3}\right)=\Omega\left(\left(\frac{1}{m}\right)^{1/12}\left(\frac{N}{\sqrt{m}}\right)^{2/3}\right)=\Omega\left(\left(\frac{N}{m}\right)^{2/3}\right)=\Omega\left((\log N)^{2/3}\right).
\]
(The third condition $r\le 2N$ is trivially satisfied.)
For this choice of $r$ the complexity of the main algorithm becomes $\tilde{O}\left(\frac{1}{\epsilon^{25/24}}\left(\frac{N}{\sqrt{m}}\right)^{2/3}\right)$.

The overall query complexity (including the complexity of computing the constant-factor approximation) is
\[\tilde{O}\left(\frac{1}{\epsilon^{25/24}}\left(\frac{N}{\sqrt{m}}\right)^{2/3}+\left(\frac{N}{\sqrt{\bar{m}}}\right)^{2/3}\right),\]
as claimed.
\end{proof}

\section*{Acknowledgments}
 FLG was supported by JSPS KAKENHI grants Nos.~JP19H04066, JP20H05966, JP20H00579, JP20H04139, JP21H04879 and MEXT Quantum Leap Flagship Program (MEXT Q-LEAP) grants No.~JPMXS0118067394 and JPMXS0120319794. 
 
\medskip
\printbibliography

\appendix

\section{Classical Algorithm for Approximate Collision Counting}
\label{sec:classical}
\noindent
In this appendix we describe the ``folklore'' classical algorithm based on random sampling for approximating the number of collisions $m$ of two injective functions $f,g\colon\{1,\ldots,N\}\to\{1,\ldots,K\}$, which has been used implicitly in~\cite{Andoni+SODA10,Naumovitz+SODA17}.
More precisely, we show that given a lower bound $\bar{m}\leq m$, there exists a randomized algorithm which outputs a value $\hat{m}$ such that $|m-\hat{m}|<\epsilon m$ holds with high probability, using
\[O\left(\frac{1}{\epsilon}\frac{N}{\sqrt{m}}+\frac{N}{\sqrt{\bar{m}}}\right)\]
evaluations of $f$ and $g$.

Let us write $N_f, N_g=\{1,\ldots,N\}$. The main procedure, which is parametrized by a real number $p\in(0,1)$, is given as Algorithm~\ref{algo:classical} below.

\begin{algorithm}
\caption{Classical Approximate Collision Counting (here $p\in(0,1)$ is a parameter)}
\label{algo:classical}
\SetKwInOut{Input}{Input}
\DontPrintSemicolon
$S_f, S_g\leftarrow\emptyset$.\;
\For{{\bf each} $i_f\in N_f$}{include $i_f$ to $S_f$ with probability $p$.}
\For{{\bf each} $i_g\in N_g$}{include $i_g$ to $S_g$ with probability $p$.}
\For{{\bf each} $i_f\in S_f$}{query $f(i_f)$.}
\For{{\bf each} $i_g\in S_g$}{query $g(i_g)$.}
Compute the number $m_S$ of collisions in $S_f\times S_g$.\;
Output $\hat{m}=\frac{m_S}{p^2}$.
\end{algorithm}

The complexity of Algorithm~\ref{algo:classical} is $O(|S_f|+|S_g|)$ queries. Since the expected value of the size of these sets is $\operatorname{E}[|S_f|+|S_g|]=2Np$, the expected query complexity of Algorithm~\ref{algo:classical} is $O(Np)$.
We now analyze its correctness.

\begin{lemma}
\label{lemma:Chernoff}
If $p\geq\frac{1}{\epsilon}\sqrt{\frac{3}{m}\log{\frac{2}{\nu}}}$ for some $\nu$, then with probability at least $1-\nu$ the output $\hat m$ of Algorithm~\ref{algo:classical} is such that $|m-\hat{m}|<\epsilon m$.
\end{lemma}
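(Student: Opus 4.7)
The plan is to set up $m_S$ as a sum of Bernoulli indicators and then invoke the standard multiplicative Chernoff bound. For each collision pair $(i,j)$ with $f(i)=g(j)$, define $X_{ij}$ to be the indicator that both $i\in S_f$ and $j\in S_g$. Then $m_S=\sum_{(i,j)} X_{ij}$, where the sum ranges over the $m$ collision pairs, and $X_{ij}$ is Bernoulli with parameter $p^2$, giving $\mathrm{E}[m_S]=mp^2$ and $\mathrm{E}[\hat m]=m$.

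The key observation I would spell out carefully is that the $m$ variables $X_{ij}$ are \emph{mutually independent}. This uses injectivity of $f$ and $g$ in an essential way: if $(i,j)$ and $(i',j')$ are two distinct collision pairs, then $i\neq i'$ (otherwise $g(j)=f(i)=f(i')=g(j')$ forces $j=j'$ by injectivity of $g$) and symmetrically $j\neq j'$. Hence the $m$ collision pairs involve $m$ pairwise distinct $f$-indices and $m$ pairwise distinct $g$-indices, so each $X_{ij}$ is determined by a disjoint pair of the underlying independent Bernoulli inclusion choices; independence of $\{X_{ij}\}$ follows.

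Having established independence, the multiplicative Chernoff bound gives, for any $\epsilon\in(0,1)$,
\[
\Pr\!\left[\,|m_S-mp^2|\ge \epsilon\, mp^2\,\right]\ \le\ 2\exp\!\left(-\frac{\epsilon^2 m p^2}{3}\right).
\]
Since $\hat m = m_S/p^2$, the event $|m_S-mp^2|<\epsilon mp^2$ is exactly $|m-\hat m|<\epsilon m$. To force the failure probability below $\nu$, I would require $2\exp(-\epsilon^2 mp^2/3)\le \nu$, which rearranges to $p\ge \frac{1}{\epsilon}\sqrt{\tfrac{3}{m}\log\tfrac{2}{\nu}}$, matching the hypothesis of the lemma exactly.

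There is no real obstacle; the only subtle point is the independence argument, which is the one place the injectivity hypothesis on $f$ and $g$ is used, and which I would highlight explicitly before applying Chernoff. The rest is a direct computation.
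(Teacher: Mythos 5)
Your proof is correct and follows essentially the same route as the paper: decompose $m_S$ into indicator variables over the $m$ collision pairs, note $\operatorname{E}[m_S]=mp^2$, and apply the multiplicative Chernoff bound, with the hypothesis on $p$ exactly forcing the failure probability below $\nu$. The only difference is that you explicitly verify mutual independence of the indicators via the injectivity of $f$ and $g$ (distinct collision pairs use disjoint inclusion choices), a point the paper's proof takes for granted; this is a welcome added detail rather than a different approach.
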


\begin{proof}
Since each element is included into $S_f$ or $S_g$ with probability $p$, the probability for a collision pair to be included is $p^2$.
Then the expected value of the number of collisions $j$ in $S_f\times S_g$ is $\operatorname{E}[j]=mp^2$.
With $m^\prime =\frac{j}{p^2}=\frac{j}{\operatorname{E}[j]}m$, the condition $|m-m^\prime|<\epsilon m$ is established if and only if $|\frac{j}{\operatorname{E}[j]}-1|<\epsilon$ holds.
By the use of Chernoff-Hoeffding bound~\cite{Randomized} we have
\begin{align*}
    \operatorname{Pr}\left[\frac{j}{\operatorname{E}[j]}>(1+\epsilon)\right] &\leq e^{-\frac{\epsilon^2}{3}\operatorname{E}[j]}.\\
    \operatorname{Pr}\left[\frac{j}{\operatorname{E}[j]}<(1-\epsilon)\right] &\leq e^{-\frac{\epsilon^2}{2}\operatorname{E}[j]}.
\end{align*}
If $p\geq\frac{1}{\epsilon}\sqrt{\frac{3}{m}\log{\frac{2}{\nu}}}$ for some $\nu$, then $\nu\geq 2e^{-\frac{\epsilon^2}{3}mp^2}=2e^{-\frac{\epsilon^2}{3}\operatorname{E}[j]}$, which implies that the probability of success is at least $1-\nu$ if $p\geq\frac{1}{\epsilon}\sqrt{\frac{3}{m}\log{\frac{2}{\nu}}}$.
\end{proof}

We now describe the whole algorithm. Let $\delta$ be a small constant. First, we apply 
Algorithm~\ref{algo:classical} with  \[
p=\frac{1}{\epsilon_0}\sqrt{\frac{3}{\bar{m}}\log{\frac{2}{\delta/2}}}
\]
for a constant $\epsilon_0>0$.
It outputs $\hat{m}_1$ such that $|m-\hat{m}_1|<\epsilon_0$ with probability at least $1-\delta/2$.
Then we apply Algorithm~\ref{algo:classical} with \[
p=\frac{1}{\epsilon}\sqrt{\frac{3}{\hat m_1}\log{\frac{2}{\delta/2}}}.
\]
It outputs $\hat{m}$ such that $|m-\hat{m}|<\epsilon m$ with probability at least $1-\delta/2$.
The overall expected query complexity is
\[O\left(\frac{1}{\epsilon}\frac{N}{\sqrt{m}}\left(\log\frac{1}{\delta}\right)^{1/2}+\frac{N}{\sqrt{\bar{m}}}\left(\log\frac{1}{\delta}\right)^{1/2}\right),\]
and the probability of success is at least $(1-\delta/2)(1-\delta/2)\ge 1-\delta$. 
This upper bound on the expected query complexity can be converted into an upper bound on the worst-case query complexity using a tail bound.

Note that the dependence on $N$ and $m$ in the above complexity is optimal. Indeed, by an argument similar to the argument used in the proof of Lemma~\ref{lemma:Chernoff}, it is easy to show that $\Omega(N/\sqrt{m})$ queries are necessary to hit a collision.

\section{Quantum Constant-Factor Approximate Collision Counting}
\label{sec:rough}
\noindent
In this appendix we describe a simple algorithm that uses Ambainis' element distinctness algorithm~\cite{Ambainis07} instead of the classical strategy of Steps 6-10 of Algorithm~\ref{algo:classical}, and computes a constant-factor approximation of the number of collisions $m$ using $\tilde O((N/\sqrt{\bar{m}})^{2/3})$ queries. Here is the formal statement of our result (again we assume that a lower bound $\bar m\le m$ is known and state the complexity in term of $\bar{m}$).

\begin{theorem}\label{th:rough}
There exists a quantum algorithm that uses $\tilde O((N/\sqrt{\bar{m}})^{2/3})$ queries and outputs with probability at least $1-1/\textrm{poly}(N)$ an estimator $\hat m$ such that $|m-\hat m|\le m/2$.
\end{theorem}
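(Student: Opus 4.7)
My plan is to combine classical subsampling with iterated applications of Ambainis' element distinctness algorithm~\cite{Ambainis07}, following the hint that the appendix algorithm modifies Algorithm~\ref{algo:classical} by replacing the classical collision-counting step (Steps 6--10) with a quantum one. Initialize the sampling rate at $p := \min(1, c\sqrt{\log N/\bar m})$ for a suitable constant $c$. Classically sample $S_f, S_g \subseteq \{1,\ldots,N\}$ by including each index independently with probability $p$, and then iteratively apply Ambainis' algorithm to $(S_f, S_g)$ to find collisions, recording each found collision into a set $B$ and augmenting the ``contains a collision'' checking predicate on subsequent calls so that pairs already in $B$ are ignored. Continue until either Ambainis reports with high confidence that no further collision remains (in which case output $\hat m := |B|/p^2$), or until $|B|$ exceeds a threshold $B_{\max} := C'\log N$, in which case we halve $p$ and restart the iteration on a freshly drawn sample.

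For correctness, observe that after at most $O(\log(N/\bar m))$ halvings of $p$ the expected number of sample collisions $mp^2$ enters the window $[\Omega(1), O(\log N)]$. At that level, with probability $1-1/\textrm{poly}(N)$ we find all $m_S = (1 \pm o(1))\,mp^2$ sample collisions by a Chernoff bound, so $\hat m = m_S/p^2$ satisfies $|\hat m - m| \le m/10 < m/2$. For the complexity, at level $i$ (after $i$ halvings) the sample has size $n_i = \Theta(Np_0/2^i)$, each iterated Ambainis call costs $\tilde O(n_i^{2/3})$ queries, and we do at most $O(B_{\max}) = \tilde O(1)$ such calls per level. Summing the geometric series $\sum_i n_i^{2/3}$, the total cost is dominated by $i=0$ and is $\tilde O(n_0^{2/3}) = \tilde O((N/\sqrt{\bar m})^{2/3})$, as required.

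The main obstacle I anticipate is verifying that the iterative ``find-and-exclude'' step retains the $\tilde O(n^{2/3})$ cost per call: each new Ambainis call has checking cost $\mathsf{C} = O(|B|) = O(\log N)$ rather than $0$, which in the formula $\mathsf{S}+(1/\sqrt{\lambda})(\mathsf{U}/\sqrt{\delta}+\mathsf{C})$ contributes only a polylog additive term absorbed by $\tilde{O}$. A secondary concern is making the two decisions --- ``has $|B|$ exceeded $B_{\max}$?'' and ``did Ambainis truly fail to find a new collision?'' --- reliable with probability $1-1/\textrm{poly}(N)$, which follows from standard amplification of the underlying Ambainis decision procedure.
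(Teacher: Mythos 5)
Your construction is essentially the paper's Algorithm~\ref{algo:rough} run in the opposite direction: the paper sweeps the sampling rate $p$ geometrically upward from $\Theta(\sqrt{\log N/N})$ to $\Theta(\sqrt{\log N/\bar{m}})$ and keeps the last estimate recorded at a level where fewer than $\lfloor 100(\log N)^2\rfloor$ collisions were found, whereas you sweep downward from $\Theta(\sqrt{\log N/\bar{m}})$ and stop at the first level where the count drops below $\Theta(\log N)$. The cost accounting is the same in both directions (at most $\tilde O(1)$ iterated Ambainis calls per level, each costing $\tilde O((Np)^{2/3})$, with the sum over levels dominated by the largest $p$), and the device of capping the number of collisions to be found per level at a polylogarithmic threshold is exactly the paper's.

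There is, however, one step that fails as written. You claim that once $mp^2$ enters the window $[\Omega(1), O(\log N)]$, a Chernoff bound gives $m_S=(1\pm o(1))mp^2$ with probability $1-1/\textrm{poly}(N)$. This is false at the low end of that window: if $mp^2=\Theta(1)$, then $m_S=0$ occurs with constant probability and the output $\hat m=0$ violates $|m-\hat m|\le m/2$. Concentration with polynomially small failure probability requires $\operatorname{E}[m_S]=mp^2\ge C\log N$ for a sufficiently large constant $C$. Your algorithm does in fact remain in that regime --- at the initial level $mp_0^2\ge c^2 m\log N/\bar{m}\ge c^2\log N$, and a subsequent level is only reached after the previous level produced more than $B_{\max}=C'\log N$ collisions, which (by concentration applied at that previous level) forces $mp^2=\Omega(C'\log N)$ after one halving as well --- but this inductive lower bound is precisely the missing argument, and it is what the paper supplies via Lemma~\ref{lemma:Chernoff} together with the explicit constants of Algorithm~\ref{algo:rough}, which guarantee that the accepting level satisfies $\operatorname{E}[m_S]\in[12\log(2N),48\log(2N)]$. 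With the window corrected to $[\Omega(\log N),O(\log N)]$ and that induction spelled out, your proof goes through.
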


The quantum algorithm is described as Algorithm~\ref{algo:rough} below, where again we write $N_f, N_g=\{1,\ldots,N\}$.

\begin{algorithm}
\caption{Quantum Constant-Factor Approximate Collision Counting}
\label{algo:rough}
\SetKwInOut{Input}{Input}
\DontPrintSemicolon
$\hat m\leftarrow 0$, $p\leftarrow 2\sqrt{\frac{3}{N}\log{(2N)}}$.\;
\While{$p<4\sqrt{\frac{3}{\bar{m}}\log{(2N)}}$}{
$S_f, S_g\leftarrow\emptyset$.\;
\For{{\bf each} $i_f\in N_f$}{include $i_f$ to $S_f$ with probability $p$.}
\For{{\bf each} $i_g\in N_g$}{include $i_g$ to $S_g$ with probability $p$.}
\If{$|S_f|\le 10\log N\cdot Np$ and $|S_g|\le 10\log N\cdot Np$} 
{Compute the number $m_S$ of collisions in $S_f\times S_g$.\;
\If{$m_S< \lfloor100(\log N)^2\rfloor$}{$\hat{m}\leftarrow\frac{m_S}{p^2}$.}
}
$p\leftarrow 2p$.
}
Output $\hat m$.
\end{algorithm}

At Steps 9-10, observe that we actually only need to check whether $m_S< \lfloor100(\log N)^2\rfloor$ holds. We thus simply apply Ambainis' element distinctness algorithm~\cite{Ambainis07} consecutively $k=\lfloor100(\log N)^2\rfloor$ times, removing the collisions as soon as they are discovered, and check if we obtain strictly less than $k$ collisions, in which case we decide that $m_S< \lfloor100(\log N)^2\rfloor$. The complexity of Steps 9-10 is thus $\tilde O((|S_f|+|S_g|)^{2/3})=\tilde O((Np)^{2/3})=\tilde O((N/\sqrt{\bar{m}})^{2/3})$ queries. Since Steps 9-10 are repeated $O(\log N)$ times, and all other steps can be implemented without any queries, the overall query complexity of Algorithm~\ref{algo:rough} is $\tilde O((N/\sqrt{\bar{m}})^{2/3})$.

We now analyze the correctness of the algorithm. First, observe that since $\operatorname{E}[|S_f|
]=\operatorname{E}[|S_g|]=Np$, the two conditions of Step 8 hold with probability at least $1-1/\textrm{poly}(N)$, from Chernoff bound. Since the success probability of Ambainis' element distinctness algorithm can also be made larger than $1-1/\textrm{poly}(N)$ (e.g., by repeating Ambainis' algorithm $O(\log N)$ times), the implementation of test of Step 10 described in the previous paragraph is correct with probability at least $1-1/\textrm{poly}(N)$. Below, we assume that the two conditions of Step~8 hold and that the test of Step 10 is implemented correctly, and show that Algorithm~\ref{algo:rough} outputs with high probability a good approximation of $m$ under these assumptions.


Lemma~\ref{lemma:Chernoff} guarantees that as soon as the inequality
\begin{equation}\label{ineq1}
p\geq2\sqrt{\frac{3}{m}\log{(2N)}}
\end{equation} 
becomes true,
the number of collisions $m_S$ at Step 10 satisfies $|m-m_S/p^2|< m/2$ with probability at least $1-1/N$. Inequality~(\ref{ineq1}) necessary becomes true during the while loop of Algorithm~\ref{algo:rough} and once it becomes true, it remains true for all the subsequent values of $p$. Let $p_0$ denote the first value taken by $p$ during the while loop such the Inequality~(\ref{ineq1}) holds. The correctness of Algorithm~\ref{algo:rough} then follows from the following lemma.

\begin{lemma}
For the value $p=p_0$, we have 
\[
\Pr[m_S< \lfloor100(\log N)^2]\ge 1-1/\textrm{poly}(N).
\]
\end{lemma}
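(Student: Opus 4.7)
The plan is to apply a standard upper-tail Chernoff bound to $m_S$. Two ingredients are needed: a sharp upper bound on $\operatorname{E}[m_S]$, and independence of the $m$ per-collision inclusion indicators.

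First I would bound $\operatorname{E}[m_S] = m\,p_0^2$ using the doubling structure of the while loop. Since $p_0$ is the first value in the geometric sequence starting at $2\sqrt{(3/N)\log(2N)}$ that clears the threshold $2\sqrt{(3/m)\log(2N)}$, either the preceding value $p_0/2$ failed that threshold, or $p_0$ equals the starting value (the latter case forces $m=N$, using that injectivity of $f$ gives $m \le N$). In either case $p_0 \le 4\sqrt{(3/m)\log(2N)}$, so $\operatorname{E}[m_S] \le 48\log(2N)$.

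Next comes independence. Writing $m_S = \sum_{(i,j)\in\mathcal{C}} X_{ij}$, where $\mathcal{C} \subseteq N_f \times N_g$ is the set of collision pairs and $X_{ij} = [i\in S_f]\cdot[j\in S_g]$, I would observe that injectivity of $g$ forces each $i \in N_f$ to appear in at most one pair of $\mathcal{C}$, and symmetrically injectivity of $f$ forces each $j \in N_g$ to appear in at most one pair. Hence the $m$ pairs of $\mathcal{C}$ are vertex-disjoint in the bipartite graph on $N_f \cup N_g$, and the $m$ indicators $X_{ij}$ depend on pairwise disjoint collections of the inclusion coin flips defining $S_f$ and $S_g$. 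So they are mutually independent Bernoullis with parameter $p_0^2$.

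With these two facts, the finish is routine: apply the multiplicative Chernoff tail $\operatorname{Pr}[m_S \ge t] \le (e\mu/t)^t$ with $\mu \le 48\log(2N)$ and $t = \lfloor 100(\log N)^2\rfloor$. Since $e\mu/t = O(1/\log N)$, this yields a bound of order $\exp\bigl(-\Omega((\log N)^2 \log\log N)\bigr)$, which is far smaller than $1/\textrm{poly}(N)$. The one step that is not mechanical is the vertex-disjointness observation; it is the reason we need injectivity of \emph{both} $f$ and $g$, and it is what turns $m_S$ into a sum of independent Bernoullis to which the standard Chernoff bound applies.
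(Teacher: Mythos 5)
Your proof is correct and follows the same route as the paper: bound $\operatorname{E}[m_S]=mp_0^2\le 48\log(2N)$ using the doubling structure of the while loop, then apply an upper-tail Chernoff bound against the threshold $\lfloor 100(\log N)^2\rfloor$. The only difference is that you make explicit two steps the paper leaves implicit --- the case analysis giving $p_0\le 4\sqrt{(3/m)\log(2N)}$, and, more substantively, the observation that injectivity of $f$ and $g$ makes the collision pairs vertex-disjoint so that $m_S$ is a sum of mutually independent Bernoulli variables, which is precisely what licenses the Chernoff bound the paper invokes.
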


\begin{proof}
Notice that $p_0\in\left[\:2\sqrt{\frac{3}{m}\log{(2N)}},\:4\sqrt{\frac{3}{m}\log{(2N)}}\:\right)$. The expected value of $m_S$ is thus
\[
\operatorname{E}[m_S]=mp_0^2\le 48\log{(2N)}
\]
and the claim follows from Chernoff bound.
\end{proof}

\end{document}